\documentclass[oneside]{article}

\usepackage{multicol}
\usepackage{macros}
\usepackage{graphicx,hyperref}
\usepackage{amsmath,amsthm}
\usepackage[utf8]{inputenc}
\usepackage{microtype}
\usepackage[charter,cal=cmcal]{mathdesign}
\usepackage{mathtools}
\usepackage[portrait,letterpaper,margin=1in]{geometry}\usepackage[backend=biber,autocite=footnote,style=numeric]{biblatex}
\usepackage{enumitem}
\usepackage{xspace}
\usepackage{algorithm}
\usepackage{algpseudocode}

\DeclareCiteCommand{\footcite}[\mkbibfootnote]
  {\usebibmacro{prenote}}
  {\usebibmacro{citeindex}%
   \printtext[brackets]{\usebibmacro{cite}}}
  {\multicitedelim}
  {\usebibmacro{postnote}}

\theoremstyle{theorem}
\newtheorem{theorem}{Theorem}
\newtheorem{lemma}{Lemma}
\theoremstyle{definition}
\newtheorem{definition}{Definition}
\newtheorem{observation}{Observation}

\newtheorem{corollary}{Corollary}

\newcommand\restr[2]{{
  \left.\kern-\nulldelimiterspace 
  #1 
  \vphantom{|} 
  \right|_{#2} 
}}

\newcommand{\psm}[2]{\restr{#1}{#2}}

\DeclareMathOperator{\psupp}{supp_{+}}
\DeclareMathOperator{\supp}{supp}
\DeclareMathOperator{\reduce}{r}

\def\R{\mathbb{R}}

\def\One{\mathbb{1}}

\def\F{\mathbb{F}}
\def\wh{\widehat}

\DeclareMathOperator{\Char}{char}
\DeclareMathOperator{\outmap}{\mathsf{out}}
\DeclareMathOperator{\outdir}{\mathsf{outdir}}
\DeclareMathOperator{\out}{\mathsf{out}}
\def\xor{\oplus}
\def\varindex{\bullet}

\DeclareMathOperator{\LCP}{LCP}

\def\coNP{\textsf{co-NP}\xspace}

\newcommand{\amat}[2]{{\mathcal{C}(#1,#2)}}
\newcommand{\CC}[2]{\amat{#1}{#2}}

\title{On the computational equivalence of co-NP refutations of a matrix being a P-matrix}
\author{Spencer Gordon \and Kevin Shu}
\bibliography{bibliography.bib}

\begin{document}
\maketitle
\begin{abstract}
  A P-matrix is a square matrix $X$ such that all principal submatrices of $X$ have positive determinant.
  Such matrices appear naturally in instances of the linear complementarity problem, where these are precisely the matrices for which the corresponding linear complementarity problem has a unique solution for any input vector.
  Testing whether or not a square matrix is a P-matrix is \coNP complete, so while it is possible to exhibit polynomially-sized witnesses for the fact that a matrix is not a P-matrix, it is believed that there is no efficient way to prove that a given matrix is a P-matrix.
  We will show that several well known witnesses for the fact that a matrix is not a P-matrix are computationally equivalent, so that we are able to convert between them in polynomial time, answering a question raised in \cite{ueopl}.
\end{abstract}

A P-matrix is a matrix with the property that all of its principal minors are strictly positive.
If $X$ is symmetric, then $X$ is a P-matrix if and only if $X$ is positive semidefinite, meaning that all of its eigenvalues are positive. An immediate consequence is that we can check if a symmetric matrix is a P-matrix in polynomial time.
On the other hand, for nonsymmetric matrices, being a P-matrix is a much more difficult property to understand, and indeed, it is \coNP complete to check that general (nonsymmetric) matrix is a P-matrix \cite{coxson1994p}.

The P-matrix property has connections to a number of different problems in computer science and linear algebra; \cite{samelson1958partition} gives one of the earliest definitions of the P-matrix property for its connection to a property of arrangements of vectors in Euclidean space, where certain convex cones spanned by these vectors divide space up into disjoint pieces.

Stickney and Watson connected this property to the linear-complementarity problem in \cite{StickneyAlan1978DMoB}, which is a quadratic optimization problem of the form
\begin{equation} \label{lcp(M,q)} \tag{$\LCP(M,q)$}
  \begin{split}
    \text{Find } w,z \in \R^n &       \\
    \text{subject to } w - Mz & = q   \\
    w, z                      & \ge 0 \\
    w \cdot z                 & = 0.
  \end{split}
\end{equation}
The problem data is $(M,z) \in \R^{n\times n} \times \R^n$ and $w, z\in \R^n$ are the optimization variables.
This problem has connections to a number of problems in applied mathematics. 
This linear complementarity problem has a unique solution for any choice of $q$ if and only if $M$ is a P-matrix, though it is not yet known if there is a polynomial time algorithm for finding this unique solution even if $M$ is a P-matrix.

While it is not easy to check that a matrix is a P-matrix, we can still give some equivalent conditions for being a P-matrix.
The three equivalent definitions we will be interested in here will be
\begin{enumerate}
  \item Every principal submatrix of $M$ has positive determinant.
  \item For every vector $x\in \R^n$ so that $x\neq 0$, $x_i(Mx)_i > 0$ for all $i\in [n]$.
  \item The matrix $M$ defines a unique sink orientation on the hypercube for every vector $q\in \R^n$. \label{item:uso}
\end{enumerate}
We will define item \ref{item:uso} precisely in section \ref{sec:usos}.
A unique sink orientation (USO) is a type of orientation on a hypercube graph which models the combinatorial properties of linear optimization problems on the hypercube polytope.
USO's are of interest because it has been suggested that they may provide a path towards constructing strongly polynomial time algorithms for both the linear programming problem and the P-matrix linear complementarity problem.

The key thing to notice here is that these three definitions are \coNP type definitions, i.e. they each allow us to produce polynomially sized proofs that an input matrix is not a P-matrix, which can be checked in polynomial time.
One difficulty in approaching this subject is that the proofs of these equivalences are spread out in various places in the literature.
Moreover, they are not explicitly constructive, meaning that it is not clear that given a witness that a matrix violates one of these conditions, we can easily provide a witness showing that it violates another of these conditions.

Our goal is to collect the proofs of these equivalences into one document, and also make these equivalences explicitly constructive, in that we will provide polynomial time algorithms which take in a witness that a matrix violates one of these conditions and outputs a witness that that matrix violates the other conditions.

We will fix our notation and define these conditions more precisely in \ref{sec:prelim}, and then give our proofs of these computational equivalences in sections \ref{sec:minor2vector} and \ref{sec:minor2uso}.
We hope that our exposition will be easy to understand as an introduction to the subject of P-matrices, as well as to USO's.

\section{Preliminaries}\label{sec:prelim}
\subsection{Definitions and notation}
For any fixed $n\in \N$, let $e_1,\dotsc,e_n$ denote the standard basis vectors for $\R^n$, and let $[n] \coloneqq \Set{1,2,\dotsc,n}$. For any subset $\alpha \subseteq [n]$, let $\bar{\alpha} \coloneqq [n] \setminus \alpha$.
\begin{definition}[Matrix Indexing] For any matrix $M\in \R^{m\times n}$ we'll use $M_{\varindex j}$ to denote the $j$th column of $M$ and $M_{i \varindex}$ to denote the $i$th row of $M$.	 When $\alpha,\beta \subseteq [n]$, we'll extend the notation so that $M_{\alpha \beta}$ is the submatrix with rows in $\alpha$ and columns in $\beta$. To avoid clutter, we'll define $\restr{M}{\alpha} \coloneqq M_{\alpha\alpha}$ as shorthand for principal submatrices.
\end{definition}

\begin{definition}[The complementary cones induced by $M$]  For any matrix $M \in \R^{n\times n}$ and $\alpha \subseteq [n]$, let $\CC{\alpha}{M}$ be the matrix with \[ \CC{\alpha}{M}_{\varindex j} = \begin{cases} -M_{\varindex j} & \quad\text{if $j \in \alpha$}   \\
              e_j              & \quad\text{if $j\notin \alpha$}\end{cases} \]
\end{definition}

\begin{definition}[]The Identity Matrix] Let $I_{k}$ denote the $k\times k$ identity matrix.
\end{definition}

\begin{definition}[Identifying subsets with boolean vectors] To each subset $\alpha \subseteq [n]$, let $\Char(\alpha) \in \F_2^n$ be the characteristic vector of $\alpha$. Given subsets	$\alpha, \beta \subseteq [n]$, we'll write $\alpha \oplus \beta \coloneqq \Set{i \in (\alpha\setminus \beta) \cup (\beta \setminus \alpha)}$ by analogy with addition in $\F_2^n$. For $\beta \subseteq [n]$ and $i\in [n]$, we'll write $\beta \pm i$ instead of $\beta \pm \Set{i}$. \end{definition}

\begin{definition}[Sign function] We'll use the sign function $\sgn: \R \to \Set{-1,0,+1}$ given by $\sgn(x) = 1$ for $x > 0$, $\sgn(x) = -1$ for $x \leq 0$.
\end{definition}

\begin{definition}[Support of a vector] Given a vector $x \in \R_{+}$, let $\psupp(x) = \Setbar{i}{x_i > 0}$.
\end{definition}

\begin{definition}[Non-degenerate matrix] A matrix $M \in \R^{n\times n}$ is said to be \emph{non-degenerate} if all of its principal minors are non-zero, i.e. $\det(M_{\alpha\alpha}) \neq 0$ for all $\alpha \subseteq [n]$.
\end{definition}
Non-degeneracy ensures that for any choice of $\alpha \subseteq [n]$, $\amat{\alpha}{M}$ is invertible:
\begin{lemma}
    The following are equivalent:
    \begin{enumerate}
        \item $M$ is non-degenerate.
        \item For all $\alpha \subseteq [n]$, $\det (\amat{\alpha}{M}) \neq 0$.
    \end{enumerate} \label{lem:nondegenerate}
\end{lemma}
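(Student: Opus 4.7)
The plan is to make the equivalence transparent by computing $\det(\mathcal{C}(\alpha, M))$ explicitly and showing it equals $\det(M_{\alpha\alpha})$ up to a nonzero scalar. Since the definition of non-degeneracy is about all $\det(M_{\alpha\alpha})$ being nonzero, this instantly yields both directions.

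First I would fix $\alpha \subseteq [n]$ and examine the column structure of $\mathcal{C}(\alpha, M)$. By applying the same permutation to both rows and columns (which leaves the determinant unchanged), I may assume $\alpha = \{1, \ldots, k\}$ where $k = |\alpha|$. Then, splitting into blocks according to the partition $[n] = \alpha \sqcup \bar\alpha$, each column $j \in \alpha$ equals $-M_{\bullet j}$, and each column $j \in \bar\alpha$ is the standard basis vector $e_j$. In the block form, this gives
\[
\mathcal{C}(\alpha,M) \;=\; \begin{pmatrix} -M_{\alpha\alpha} & 0 \\ -M_{\bar\alpha\alpha} & I_{n-k} \end{pmatrix},
\]
since for $i \in \alpha$ and $j \in \bar\alpha$ the entry $(e_j)_i = 0$, and for $i,j \in \bar\alpha$ the entries form $I_{n-k}$.

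Next I would compute the determinant of this block lower-triangular matrix as the product of the determinants of its diagonal blocks, yielding
\[
\det(\mathcal{C}(\alpha, M)) \;=\; \det(-M_{\alpha\alpha})\cdot\det(I_{n-k}) \;=\; (-1)^{|\alpha|}\,\det(M_{\alpha\alpha}).
\]
Since $(-1)^{|\alpha|}$ is nonzero, $\det(\mathcal{C}(\alpha, M)) \neq 0$ if and only if $\det(M_{\alpha\alpha}) \neq 0$. Taking this equivalence over all $\alpha \subseteq [n]$ gives the biconditional in the statement.

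There is no serious obstacle here: the only subtlety is making sure the permutation argument is stated cleanly so the reader sees that simultaneously reordering rows and columns preserves both $\det(\mathcal{C}(\alpha,M))$ and the block identification of $M_{\alpha\alpha}$. Once that is in place, the block-triangular determinant computation finishes the proof immediately.
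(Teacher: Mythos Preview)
Your proof is correct and follows essentially the same approach as the paper: reduce by simultaneous row/column permutation to $\alpha=\{1,\dots,k\}$, write $\mathcal{C}(\alpha,M)$ in block lower-triangular form, and read off $\det(\mathcal{C}(\alpha,M))=(-1)^{|\alpha|}\det(M_{\alpha\alpha})$. If anything, your version is slightly cleaner in noting explicitly that the permutation acts on both rows and columns and in writing the identity block as $I_{n-k}$.
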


\begin{proof}
    Let $\alpha \subseteq [n]$, and by permuting the rows, we may as well take  $\alpha$ as $\{1 ,\dots, k\}$ and consider $\amat{\alpha}{M}$ in block form:
    \[
        \begin{pmatrix}
            -M|_{\alpha}            & 0     \\
            -M_{\bar{\alpha}\alpha} & I_{k}
        \end{pmatrix}
    \]
    By the formula for the determinant of a block lower triangular matrix, the determinant of
    \[
        \amat{\alpha}{M} = (-1)^k\det(M_{\alpha})\det(I_k).
    \]
    In particular, $\amat{\alpha}{M}$ is singular if and only if $M_{\alpha}$ is.

\end{proof}

\begin{definition}[The outmap and out directions]
    Now for any matrix $M \in \R^{n\times n}$, subset $\alpha \subseteq [n]$ for which $\amat{\alpha}{M}$ is invertible, and vector $q\in \R^n$, we'll define $\outmap_q(\alpha,M) \in \Set{0,1}^n$ as follows:
    \[ \outmap_q(\alpha, M)_i = \begin{cases}
            1 & \quad\text{if $(\CC{\alpha}{M}^{-1}q)_i > 0$}    \\
            0 & \quad\text{if $(\CC{\alpha}{M}^{-1}q)_i \leq 0$}
        \end{cases} \]

    Finally, we'll define $\outdir_q(\alpha, M) \coloneqq \psupp(\outmap_q(M, \alpha))$. By Lemma~\ref{lem:nondegenerate}, the outmap is well-defined for all $\alpha$ if and only if $M$ is non-degenerate.
\end{definition}



\subsection{The three different witnesses}
A matrix $M \in \R^{n\times n}$ is not a $P$-matrix if and only if we can find certificates of the following form:
\begin{enumerate}[label=(PV\arabic*)]
    \item \emph{Non-positive minor.} A subset $\alpha \subseteq [n]$ such that $\det(\restr{M}{\alpha}) \leq 0$. \label{PV1}
    \item \emph{Sign-reversing vector.} A vector $x\in \R^n$, $x\neq 0$, such that $x_i(Mx)_i \leq 0$ for all $i\in [n]$. \label{PV2}
    \item \emph{A failed unique sink orientation.} Either a set $\alpha \subseteq [n]$ such that $\amat{\alpha}{M}$ is not invertible or a vector $q\in \R^n$ and two distinct sets $\alpha, \beta \subseteq [n]$ such that \[\Brack{\Char(\alpha) \xor \Char(\beta)} \cap \Brack{\outmap_q(\alpha, M) \xor \outmap_q(\beta, M)} = 0^n\text{.} \] \label{PV3}
\end{enumerate}

\subsection{Unique Sink Orientations}\label{sec:usos}

The hypercube graph $H^n$ is a graph with vertex set $2^{[n]}$ where there is an edge between $\alpha$ and $\beta$ exactly when $\Card{\alpha \oplus \beta} = 1$.


Let $S, T \subseteq [n]$. To each such pair of subsets we associate a subgraph of $H^n$, which is induced by the vertex set
\[
    [S,T] = \Set{\alpha : S \subseteq \alpha \subseteq T}.
\]
We will denote this induced subgraph by $H^n[S,T]$, and we will refer to the subgraph $H^n[S,T]$ as a face of $H^n$.

An orientation of the graph $H^n$ is an assignment of a direction to each edge of $H^n$, i.e. a map $\mathcal{O} : E(H^n) \rightarrow V(H^n) \otimes V(H^n)$, where $\mathcal{O}(\{\alpha, \beta\}) = (\alpha, \beta)$ is interpreted as meaning that the edge $\{\alpha, \beta\}$ is directed from $\alpha$ to $\beta$.

A unique sink orientation (USO) on the hypercube graph is an assignment of a direction to each edge of $H^n$ so that for each $S, T \subseteq [n]$, there is a unique vertex $v$ in $H^n[S,T]$ so that all edges in $H^n[S,T]$ are directed towards $v$.
This vertex $v$ is called the sink of the face $H^n[S,T]$.


The unique sink orientation problem is to find the unique sink of a given unique sink orientation.

One natural way to obtain a USO on $H^n$ is by considering the polytope $Q^n$, which is the convex hull of the vectors $\Set{\Char(\alpha):\alpha \subseteq[n]}$.
Two vertices $\Char(\alpha)$ and $\Char(\beta)$ are adjacent in $Q^n$ if and only if $\alpha$ and $\beta$ are adjacent in the graph $H^n$.

A linear function $f : \R^n \rightarrow \R$ assigns each vertex of $Q^n$ a value in $\R$. If all of the values assigned to the vertices of $Q^n$ are distinct (in which case, we will call $f$ nondegenerate), then we can obtain a USO on $H^n$ by directing the edge $\{\alpha, \beta\} \in E(H^n)$ to be $(\alpha, \beta)$ if and only if $f(x_{\beta}) > f(x_{\alpha})$.
This clearly produces a USO, and moreover, this direction of the edges is acyclic in the sense that there are no directed cycles in the graph.
Finding the global maximum value of $f$ on $Q^n$ can thus be viewed as an instance of the USO problem.
The simplex algorithm can be viewed combinatorially as a way of traversing this USO. General approaches to the USO problem, where the USO need not be acyclic, inspects vertices in a random access way, rather than viewing them sequentially (as in \cite{959931}).

The outmap associated with an orientation $\mathcal{O}$ of $H^n$ is a map $\out:2^{[n]} \rightarrow 2^{[n]}$, where $\out(\alpha) = \{i : \mathcal{O}(\{\alpha, \alpha \oplus \{i\}\}) = (\alpha, \alpha \oplus \{i\}) \}$.
In \cite{959931}, it was shown that if $\mathcal{O}$ is a USO, the outmap is in fact a bijection, and it moreover satisfies the property that
\[
    (\alpha \oplus \beta) \cap (\mathcal{O}(\alpha) \oplus \mathcal{O}(\beta)) \neq \varnothing.
\]

Indeed, in \cite{959931}, it was shown that if $\out : 2^{[n]} \rightarrow 2^{[n]}$ satisfies $(\alpha \oplus \beta) \cap (\mathcal{O}(\alpha) \oplus \mathcal{O}(\beta)) \neq \varnothing$, then in fact, $\out$ is the outmap of some unique sink orientation.

This motivates our definition that a witness to the fact that $\out$ is not a USO is a pair $\alpha, \beta \subseteq 2^{[n]}$ so that $\alpha \neq \beta$, and
\[
    (\alpha \oplus \beta) \cap (\mathcal{O}(\alpha) \oplus \mathcal{O}(\beta)) = \varnothing.
\]

\subsection{USOs and P-matrix Linear Complementarity Problems}
A P-matrix also produces a USO through its associated linear complementarity problem.
In the linear complementarity problem, we can consider
\begin{align*}
    \text{Find } w,z \in \R^n &       \\
    \text{subject to } w - Mz & = q   \\
    w, z                      & \ge 0 \\
    w \cdot z                 & = 0.
\end{align*}
The constraints $w\cdot z = 0$ and $w, z \ge 0$ imply that for each $i \in [n]$, either $w_i = 0$ or $z_i = 0$.
Hence, there is a natural correspondence between the subsets $\alpha \subseteq [n]$ and potential solutions, namely, we can assume that $\supp(w) \subseteq \alpha$, and that $\supp(z) \subseteq \bar{\alpha}$.
If we assume that this support condition holds, then the only possible solutions to this problem are those that solve the linear system
\[
    \sum_{i \in \alpha} w_i e_i -
    \sum_{i \in \bar{\alpha}} z_i M_{\varindex i}  =
    \amat{\alpha}{M}
    \Paren{\sum_{i \in \alpha} w_i e_i +
        \sum_{i \in \bar{\alpha}} z_i e_i}  = q
\]
Hence, we see that this linear complementarity problem has a solution with these support constraints if and only if $\amat{\alpha}{M}^{-1}q$ is a nonnegative vector.

We will say that $q$ is nondegenerate if for every $\alpha \subseteq [n]$, and each $i \in [n]$, $(\amat{\alpha}{M}^{-1}q)_i \neq 0$.

If $q$ is nondegenerate, one can show that if $M$ is a P-matrix and $\alpha \oplus \beta = \{i\}$, then exactly one of $(\amat{\alpha}{M}^{-1}q)_i$ and $(\amat{\beta}{M}^{-1}q)_i$ is positive.
Hence, we can orient the edge $\{\alpha, \beta\} \in E(H^n)$ as $(\alpha, \beta)$ if $(\amat{\beta}{M}^{-1}q)_i > 0$, and it was shown in \cite{StickneyAlan1978DMoB} that this defines a unique sink orientation.

\section{Non-positive minors and sign-reversing vectors}\label{sec:minor2vector}
\subsection{Non-positive minors to sign-reversing vectors}

\begin{theorem}
    We can convert a violation of type \ref{PV1} to a violation of type \ref{PV2} in polynomial time.
\end{theorem}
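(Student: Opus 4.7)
The plan is to first reduce to a minimal ``bad'' principal submatrix and then produce the sign-reversing vector by inverting that submatrix. Concretely, starting from the given $\alpha \subseteq [n]$ with $\det(\restr{M}{\alpha}) \leq 0$, I would greedily remove elements from $\alpha$ one at a time, retaining each removal only if the resulting principal minor is still $\leq 0$. Since principal minors are computable in polynomial time, this yields in polynomial time a subset $\alpha' \subseteq \alpha$ such that $\det(\restr{M}{\alpha'}) \leq 0$ but $\det(\restr{M}{\beta}) > 0$ for every strict subset $\beta \subsetneq \alpha'$.

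Now let $A = \restr{M}{\alpha'}$. If $\det(A) = 0$, I compute any nonzero $y \in \ker(A)$ via Gaussian elimination; then $(Ay)_i = 0$ and so $y_i (Ay)_i = 0$ holds trivially for every $i \in \alpha'$. If instead $\det(A) < 0$, I let $y = A^{-1} e_k$ for an arbitrary $k \in \alpha'$. Then $(Ay)_i = \delta_{ik}$, so $y_i(Ay)_i = 0$ for $i \neq k$, while $y_k(Ay)_k = y_k = (A^{-1})_{kk}$. By the cofactor formula, $(A^{-1})_{kk} = \det(\restr{M}{\alpha' \setminus \{k\}})/\det(A)$, and by the minimality of $\alpha'$ the numerator is strictly positive while the denominator is strictly negative, so $(A^{-1})_{kk} < 0$.

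In either case, I extend $y$ to $x \in \R^n$ by setting $x_i = y_i$ for $i \in \alpha'$ and $x_i = 0$ otherwise. For $i \in \alpha'$, all contributions to $(Mx)_i$ from indices outside $\alpha'$ vanish, so $(Mx)_i = (Ay)_i$ and hence $x_i(Mx)_i = y_i(Ay)_i \leq 0$. For $i \notin \alpha'$ we have $x_i = 0$, so $x_i(Mx)_i = 0$. Since $y \neq 0$, $x$ is a nonzero sign-reversing vector, certifying \ref{PV2}.

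The step I expect to require the most care is justifying the sign of $(A^{-1})_{kk}$; this hinges entirely on the minimality of $\alpha'$, which is what guarantees that every proper principal minor of $A$ is strictly positive. Once that observation is in place, the rest of the argument is essentially mechanical, and each stage (greedy minimization, Gaussian elimination, cofactor evaluation) is transparently polynomial time.
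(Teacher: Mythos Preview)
Your argument is correct and takes a genuinely different route from the paper. The paper handles the case $\det(\restr{M}{\alpha}) < 0$ by observing that $\restr{M}{\alpha}$ must have a negative real eigenvalue (since complex eigenvalues pair off and the product of eigenvalues is negative), and then taking the corresponding eigenvector as the sign-reversing vector; no minimality reduction is performed. Your approach instead first shrinks $\alpha$ to a witness that is minimal with respect to single-element deletion and then reads off a column of the inverse. This is more elementary in an important respect: the paper's eigenvector is in general irrational and must be approximated (the paper invokes inverse iteration in a footnote), whereas your construction stays entirely within exact rational arithmetic. On the other hand, the paper's proof is shorter and requires no preprocessing of $\alpha$.

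One small point to tighten: your greedy procedure only guarantees that $\det(\restr{M}{\alpha' - i}) > 0$ for every $i \in \alpha'$, not that $\det(\restr{M}{\beta}) > 0$ for \emph{every} strict subset $\beta \subsetneq \alpha'$. Fortunately your proof only ever uses the former (specifically $\beta = \alpha' \setminus \{k\}$), so the argument stands; just adjust the wording of the minimality claim accordingly.
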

\begin{proof} Given a subset $\alpha$ with $\det(\psm{A}{\alpha}) \leq 0$ as in \ref{PV1}, there are two cases.

    If $\det(\psm{A}{\alpha}) = 0$, we can simply use Gaussian elimination to find a non-zero vector $\wh{x} \in \ker(\psm{A}{\alpha}) \subseteq \R^{\alpha}$. Now let $x\in \R^n$ be the extension of $\wh{x}$ to $\R^n$ where the coordinates not in $\alpha$ are all zero. Then for $i\in \alpha$, $(Ax)_i = 0$ so $x_i(Ax)_i \leq 0$ and for $i\notin \alpha$, $x_i = 0$ so $x_i(Ax)_i \leq 0$ and $x$ is a violation of type \ref{PV2}.

    If $\det(\psm{A}{\alpha}) < 0$, then $\psm{A}{\alpha}$ must have a negative real eigenvalue, since the product of the eigenvalues of $\psm{A}{\alpha}$ is a negative real number and complex eigenvalues come in conjugate pairs. Consider an eigenvector $\wh{x} \in \R^{\alpha}$ for some negative eigenvalue.\footnote{Using inverse iteration, we can compute an approximation that is sufficient for our purposes in time $\poly(n,B)$ where $B$ is the sum of the bit lengths of the entries of $M$.} Then $\psm{A}{\alpha}\wh{x} = \lambda \wh{x}$ for some $\lambda \in \R$ with $\lambda < 0$. Now extend $\wh{x}$ by padding with zeros to $x \in \R^n$. Then for $i \in \alpha$, $x_i(Ax)_i = \lambda x_i^2 \leq 0$. For $i \notin \alpha$, $x_i = 0$ so $x_i(Ax)_i \leq 0$.
\end{proof}

\subsection{Sign-reversing vectors to non-positive minors}
The next few propositions follow immediately from \cite{gale1965jacobian}.

\begin{lemma} \label{lem1}
    If $A\in \R^{n\times n}$ is a P-matrix, then the only $x\in \R^n$ satisfying
    \begin{equation} Ax \leq 0, x \geq 0 \label{eq1}\end{equation} is the zero vector.\footcite[The lemma and proof follow Theorem 1 of][]{gale1965jacobian}
\end{lemma}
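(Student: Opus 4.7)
The plan is to proceed by induction on $n$, deriving a contradiction at each step from the assumption that there is a nonzero $x \geq 0$ with $Ax \leq 0$. The base case $n = 1$ is immediate, since a $1 \times 1$ P-matrix is a positive scalar. For the inductive step with $n \geq 2$, suppose the claim holds in all dimensions less than $n$, let $A$ be an $n \times n$ P-matrix, and assume toward contradiction that some $x \geq 0$, $x \neq 0$ satisfies $Ax \leq 0$. If any coordinate $x_i$ vanishes, set $\alpha = [n] \setminus \{i\}$; the principal submatrix $\restr{A}{\alpha}$ is again a P-matrix (its principal minors are also principal minors of $A$), the restriction $\restr{x}{\alpha}$ is nonzero and nonnegative, and $(\restr{A}{\alpha}\restr{x}{\alpha})_j = (Ax)_j \leq 0$ for $j \in \alpha$ because the excluded column is multiplied by $x_i = 0$. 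By the induction hypothesis $\restr{x}{\alpha} = 0$, hence $x = 0$, a contradiction.

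It therefore suffices to handle the case $x > 0$ strictly. I plan to pass to an extreme ray of the polyhedral cone $K = \{v \in \R^n : v \geq 0,\ Av \leq 0\}$, which is nonzero and pointed (if $v, -v \in K$ and both are nonnegative, then $v = 0$). Let $\bar{x}$ generate an extreme ray of $K$. If any coordinate of $\bar{x}$ vanishes, the reduction above already gives a contradiction; otherwise $\bar{x} > 0$, and because $\bar{x}$ spans a one-dimensional face of a cone in $\R^n$, at least $n-1$ of $K$'s defining inequalities must be tight at $\bar{x}$ with linearly independent gradients. Since no nonnegativity constraint is tight, at least $n-1$ of the rows of $A\bar{x} \leq 0$ are equalities. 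If all $n$ are tight, then $A\bar{x} = 0$ forces $\det A = 0$, contradicting the P-matrix hypothesis. Otherwise $A\bar{x} = -b\, e_k$ for some index $k \in [n]$ and some $b > 0$, so $\bar{x} = -b A^{-1} e_k$ and $(A^{-1})_{kk} = -\bar{x}_k / b < 0$. But the adjugate formula yields $(A^{-1})_{kk} = \det(\restr{A}{[n]\setminus\{k\}}) / \det(A)$, a ratio of positive principal minors of $A$, hence strictly positive --- contradiction.

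The main conceptual hurdle is the extreme-ray reduction: it relies on standard polyhedral geometry, namely that every nontrivial pointed polyhedral cone in $\R^n$ has an extreme ray and that any generator of such a ray saturates at least $n-1$ linearly independent defining inequalities. The remainder is essentially bookkeeping, leveraging two facts controlled by the P-matrix hypothesis: that P-matrices are closed under taking principal submatrices, and that the cofactor formula expresses each diagonal entry of $A^{-1}$ as a ratio of two principal minors of $A$.
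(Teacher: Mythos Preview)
Your proof is correct, but it follows a genuinely different path from the paper's. Both arguments use induction on $n$ and reduce to an $(n-1)$-dimensional principal submatrix when some coordinate of the candidate vector vanishes. They diverge in the strictly positive case. The paper performs an explicit pivot: it takes a column $b$ of $A^{-1}$ (which has a positive entry, implicitly by the same cofactor identity you use), sets $\theta = \min_{i:b_i>0} x_i/b_i$, and replaces $x$ by $y = x - \theta b$; this $y$ is still nonnegative with $Ay \leq 0$ but now has a zero coordinate, so induction applies and forces $y=0$, whence $Ax = \theta e_1$ gives the contradiction. You instead invoke polyhedral geometry: pass to an extreme ray of the pointed cone $\{v \geq 0 : Av \leq 0\}$, observe that a strictly positive generator must satisfy at least $n-1$ of the $A$-constraints with equality (and not all $n$, since $A$ is invertible), and then derive a sign contradiction on $(A^{-1})_{kk}$ via the adjugate formula.

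The paper's pivot is more constructive and is precisely the operation $\reduce_i$ that the paper defines and exploits immediately afterward to build the polynomial-time conversion from \ref{PV2*} to \ref{PV1}; in that sense the paper's proof is doing double duty. Your extreme-ray argument is conceptually clean and foregrounds the identity $(A^{-1})_{kk} = \det(\restr{A}{[n]\setminus\{k\}})/\det(A)$, but it rests on an existence statement (a nontrivial pointed cone has an extreme ray) rather than an explicit step, which is a mild mismatch with the paper's algorithmic emphasis.
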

\begin{proof}
    We'll use induction on $n$. Assume that $x \in \R^n$ satisfies \eqref{eq1}, and that $A$ is a P-matrix. Then $A$ is invertible and $A^{-1}$ has at least one positive entry in each column. Let $b \in \R^n$ be the first column of $A^{-1}$. Let $\theta = \min_{i:b_i > 0} x_i/b_i$. This is well-defined since we are minimizing over a non-empty set. Let $k \in \argmin_{i:b_i > 0} x_i/b_i$ be the index of a component achieving the minimum. Define $y = x - \theta b$. By construction, $y_i = x_i - (x_k/b_k)b_i \geq 0$ for all $i$ and $y_k = 0$.
    Moreover, $Ay = Ax - \theta Ab$ so $(Ay)_i = (Ax)_i - \theta \One\Brack{i=1} \leq 0$ for all $i$. Let $\alpha = [n]\setminus \Set{k}$. Let $\wh{A} \coloneqq A_{\alpha,\alpha}$ and let $\wh{y} \coloneqq y_{\alpha}$. Then $\wh{A}$ is a P-matrix and $\wh{y}\geq 0$, $\wh{A}\wh{y} \leq 0$. By induction, $\wh{y} = 0$. Since $y_k = 0$, we have that $y = 0$ and $Ay = Ax - \theta Ab = 0$ so $Ax = \theta Ab$. For $i\neq 1$, $(Ax)_i = 0$; $(Ax)_1 = \theta$. But this is a contradiction, since $\theta > 0$ and $Ax \leq 0$.
\end{proof}

P-matrices are invariant under conjugation by signature matrices, where a \emph{signature matrix} is a diagonal matrix with all diagonal entries in $\Set{\pm 1}$. It will be useful in what follows to observe that signature matrices are orthogonal.

\begin{lemma} \label{lem:conjugation by sign matrix}
    Let $D$ be any signature matrix. Then $DAD$ is a P-matrix if and only if $A$ is a P-matrix. Furthermore, $Dx$ is a sign-reversing vector for $DAD$ if and only if $x$ is a sign-reversing vector for $A$.
\end{lemma}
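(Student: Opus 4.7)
The proof breaks into two essentially computational observations, both of which rely on the fact that a signature matrix $D$ satisfies $D^2 = I$ and is diagonal (so it interacts well with extracting principal submatrices).

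For the first claim, the plan is to show directly that $\det((DAD)_{\alpha\alpha}) = \det(A_{\alpha\alpha})$ for every $\alpha \subseteq [n]$. Since $D$ is diagonal, the principal submatrix $(DAD)_{\alpha\alpha}$ factors as $D_\alpha A_{\alpha\alpha} D_\alpha$, where $D_\alpha$ is the principal submatrix of $D$ on $\alpha$ (itself a signature matrix). Then by multiplicativity of the determinant, $\det((DAD)_{\alpha\alpha}) = \det(D_\alpha)^2 \det(A_{\alpha\alpha}) = \det(A_{\alpha\alpha})$, since $\det(D_\alpha) \in \{\pm 1\}$. This gives the equivalence in both directions simultaneously: the principal minors of $A$ and $DAD$ agree exactly, so one is a P-matrix iff the other is.

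For the second claim, I would compute $(Dx)_i \bigl((DAD)(Dx)\bigr)_i$ directly. Using $D^2 = I$, the inner product simplifies: $(DAD)(Dx) = DA(D^2)x = DAx$, so $(Dx)_i \bigl((DAD)(Dx)\bigr)_i = (Dx)_i (DAx)_i = d_i^2 \, x_i (Ax)_i = x_i (Ax)_i$. Therefore the sign-reversing inequalities $x_i(Ax)_i \leq 0$ are preserved component-wise under the substitution $x \mapsto Dx$, $A \mapsto DAD$. Also $Dx = 0$ iff $x = 0$ since $D$ is invertible, so the non-triviality condition transfers as well. Combining the two directions (using that $D = D^{-1}$, so the map $x \mapsto Dx$ is its own inverse) yields the stated biconditional.

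I do not anticipate any real obstacle: the whole content of the lemma is that $D^2 = I$ and that $D$ commutes with the operation of taking principal submatrices. The main thing to be careful about is keeping track of the two occurrences of $D$ on either side of $A$ so that they cancel correctly in both computations.
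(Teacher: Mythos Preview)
Your proof is correct. Your argument for the second claim is essentially identical to the paper's: both compute $(Dx)_i((DAD)(Dx))_i = x_i(Ax)_i$ using $D^2=I$.

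For the first claim you take a slightly different route. The paper simply observes that the second statement implies the first: since $A$ is a P-matrix if and only if it admits no nonzero sign-reversing vector, and since $x\mapsto Dx$ is a bijection carrying sign-reversing vectors for $A$ to sign-reversing vectors for $DAD$, the P-matrix property transfers. You instead prove the first claim directly by showing $\det((DAD)_{\alpha\alpha})=\det(D_\alpha)^2\det(A_{\alpha\alpha})=\det(A_{\alpha\alpha})$ for every $\alpha$. Your approach is a bit more self-contained, since it does not invoke the equivalence between \ref{PV1} and \ref{PV2}; the paper's approach is marginally shorter. Either way the content is the same trivial observation that $D^2=I$.
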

\begin{proof}
    The proof of the latter of the two propositions implies the former. Fix a vector $x \in \R^n$ and a signature matrix $D$. Let $y = Dx$, and let $B = DAD$. Then
    \[ x_i(Ax)_i = x_i(D^2AD^2x)_i = (Dx)_i (DAD(Dx))_i = y_i(By)_i \] so $x_i(Ax)_i\leq 0 \iff y_i(By)_i \leq 0$.
\end{proof}

\begin{observation} \label{obs:strictly positive} Let $x \geq 0$ and define $\alpha \coloneqq \psupp(x)$. Then $x$ is a sign-reversing vector for $A$ if and only if $x_{\alpha}$ is a sign-reversing vector for $\psm{A}{\alpha}$.
\end{observation}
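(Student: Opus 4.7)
The plan is essentially to unpack the definitions, since the observation is immediate once one separates the indices in $\alpha$ from those outside of $\alpha$. Let me sketch the two directions together.

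First I would note that the sign-reversing condition $x_i(Ax)_i \leq 0$ is automatically satisfied at any index $i$ where $x_i = 0$, regardless of the value of $(Ax)_i$. So the only nontrivial content of the condition $x_i(Ax)_i \leq 0$ for all $i \in [n]$ lies at the indices in $\psupp(x) = \alpha$, where $x_i > 0$ (using $x \geq 0$) and the inequality reduces to $(Ax)_i \leq 0$.

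Second, I would observe that since $x_j = 0$ for every $j \notin \alpha$, the matrix-vector product at any row $i \in \alpha$ only depends on the $\alpha$-entries of $x$ and on the $(\alpha,\alpha)$-block of $A$:
\[ (Ax)_i \;=\; \sum_{j \in [n]} A_{ij} x_j \;=\; \sum_{j \in \alpha} A_{ij} x_j \;=\; (\psm{A}{\alpha}\, x_{\alpha})_i. \]
Combining this with the previous step, the condition $x_i(Ax)_i \leq 0$ for all $i \in [n]$ is equivalent to $(x_{\alpha})_i (\psm{A}{\alpha} x_{\alpha})_i \leq 0$ for all $i \in \alpha$.

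The only remaining issue is the nonzero requirement. Because $x \geq 0$ and $\alpha = \psupp(x)$, the restriction $x_\alpha$ is strictly positive in every coordinate, so $x_\alpha \neq 0$ iff $\alpha \neq \emptyset$ iff $x \neq 0$. The two nonvacuity conditions therefore agree, and together with the equivalence above this gives both directions of the observation. There is no real obstacle, since the statement amounts to the standard fact that the sign-reversing condition on a vector supported in $\alpha$ is insensitive to everything outside the $(\alpha,\alpha)$-block.
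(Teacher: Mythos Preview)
Your argument is correct and complete. The paper does not supply a proof for this observation at all---it is stated without justification, as the claim is immediate from the definitions---so your unpacking of the indices inside and outside $\alpha$, together with the check that $x\neq 0 \iff x_\alpha\neq 0$, is exactly the routine verification one would expect.
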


Using Observation~\ref{obs:strictly positive} and Lemma~\ref{lem:conjugation by sign matrix} we'll be able to reduce the task of transforming a witness to \ref{PV2} into a witness to \ref{PV1} to the simpler task of starting with a witness to
\begin{enumerate}[label=(PV\arabic*$^\ast$), start=2]
    \item A vector $x > 0$ such that $Ax \leq 0$. \label{PV2*}
\end{enumerate} and transforming it into a witness to \ref{PV1}.

\begin{corollary} Given a witness to \ref{PV2} and a procedure to obtain a witness to \ref{PV1} from \ref{PV2*} we can obtain a witness to \ref{PV1}.
\end{corollary}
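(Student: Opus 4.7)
The plan is to chain together the two tools that were just established, namely Lemma~\ref{lem:conjugation by sign matrix} (conjugation by a signature matrix preserves the sign-reversing property) and Observation~\ref{obs:strictly positive} (restriction to the positive support preserves the sign-reversing property), in order to reduce a witness to \ref{PV2} to one in the strictly positive form of \ref{PV2*}. First, given a sign-reversing vector $x \in \R^n$ for $A$, I would form the signature matrix $D$ with $D_{ii} = \sgn(x_i)$, so that $y \coloneqq Dx \geq 0$. By Lemma~\ref{lem:conjugation by sign matrix}, $y$ is a sign-reversing vector for $B \coloneqq DAD$. Next, setting $\alpha \coloneqq \psupp(y)$ and applying Observation~\ref{obs:strictly positive}, the restriction $y_\alpha$ is a strictly positive sign-reversing vector for $\psm{B}{\alpha}$. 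Since $y_\alpha > 0$, this says exactly that $\psm{B}{\alpha} y_\alpha \leq 0$, so $y_\alpha$ is a witness to \ref{PV2*} for the matrix $\psm{B}{\alpha}$.

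Now I would feed $(y_\alpha, \psm{B}{\alpha})$ into the given procedure, obtaining a subset $\beta \subseteq \alpha$ with $\det\bigl(\psm{(\psm{B}{\alpha})}{\beta}\bigr) \leq 0$. The final step is to translate this back into a non-positive principal minor of the original matrix $A$. Observe that $\psm{(\psm{B}{\alpha})}{\beta} = \psm{B}{\beta} = \psm{D}{\beta}\,\psm{A}{\beta}\,\psm{D}{\beta}$, and since $\psm{D}{\beta}$ is itself a signature matrix we have $\det(\psm{D}{\beta})^2 = 1$. Hence $\det(\psm{B}{\beta}) = \det(\psm{A}{\beta})$, and $\beta$ is a witness to \ref{PV1} for $A$.

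There is no real obstacle: the content of the corollary is bookkeeping that justifies the reduction to \ref{PV2*}, and the only thing to check carefully is that conjugation by a signature matrix interacts well with taking principal submatrices and determinants, which it does because a principal submatrix of a signature matrix is again a signature matrix and signature matrices are orthogonal. All the operations (computing $\sgn$, forming $D$, restricting to $\alpha$, and translating $\beta$ back) are clearly polynomial-time, so the resulting algorithm preserves the computational efficiency of the assumed subroutine.
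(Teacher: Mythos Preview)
Your proof is correct and follows essentially the same route as the paper: conjugate by a signature matrix to make the sign-reversing vector nonnegative, restrict to its positive support to land in \ref{PV2*}, apply the assumed procedure, and then lift the resulting non-positive minor back to $A$ using $\det(\psm{D}{\beta})^2 = 1$. The only cosmetic difference is in the choice of $D$ on coordinates where $x_i = 0$, which is immaterial since those coordinates are discarded when passing to $\alpha = \psupp(y)$.
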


\begin{proof} Let $x\in \R^n$ be a witness to \ref{PV2}, i.e., a sign-reversing vector. Let $D$ be the signature matrix where $D_{ii} = -1$ for $i \in \Setbar{i}{x_i < 0}$ and $D_{ii} = 1$ otherwise. Now $y \coloneqq Dx \geq 0$ is a sign-reversing vector for $B = DAD$. Letting $\alpha \coloneqq \psupp(y)$, we also have that $y_\alpha$ is a strictly-positive sign-reversing vector for $\psm{B}{\alpha}$. If we use the given procedure to obtain from this a non-positive minor of $\psm{B}{\alpha}$, let's say $\psm{B}{\beta}$ for $\beta \subseteq \alpha$, we can lift this back to a non-positive minor for $A$ by observing that $\psm{B}{\beta} = \psm{D}{\beta}\psm{A}{\beta}\psm{D}{\beta}$ so
    \[ \det(\psm{B}{\beta}) = \det(\psm{D}{\beta})\det(\psm{A}{\beta})\det(\psm{D}{\beta}) = \Paren{\det(\psm{D}{\beta})^2}\det(\psm{A}{\beta}) \] and $\det(\psm{B}{\beta}) \leq 0$ implies $\det(\psm{A}{\beta}) \leq 0$.
\end{proof}

We'll now abstract the operation used in the inductive proof of Lemma \ref{lem1}. The only requirement we had for the operation was that $M$ be invertible and that each column of $M^{-1}$ contains at least one positive entry.

\begin{definition}[Reducing a sign-reversing vector]
    Given a reducible matrix $A \in \R^{n\times n}$, a sign-reversing vector $x\in \R^n_{>0}$, and an index $i\in [n]$, we'll say that \emph{$x$ can be reduced at $i$ with respect to $M$} if $M$ is invertible and $\det(M_{\alpha,\alpha}) > 0$ where $\alpha \coloneqq [n]-i$. In this case we can define the following function: \[ \reduce_i(A,x) \coloneqq x - \Paren{\min_{j:A^{-1}_{ji}>0} x_i/A^{-1}_{ji}}A^{-1}_{\varindex i}.\]
    It will be convenient to define $\theta_i(A,x) \coloneqq \Paren{\min_{j:A^{-1}_{ji} > 0} x_i/A^{-1}}$ so that $\reduce_i(A,x) = x - \theta_i(A,x)A^{-1}_{\varindex i}$.
\end{definition}

\begin{lemma}
    Given a matrix $M\in \R^{n\times n}$ and a sign-reversing vector $x\in \R^n_{>0}$ such that $x$ can be reduced at $i$, the vector $y^i \coloneqq \reduce_i(M,x)$ is a non-negative sign-reversing vector for $M$. \label{lem:reduction preserves sign-reversing}
\end{lemma}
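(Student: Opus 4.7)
The plan is to verify directly that $y^i \coloneqq \reduce_i(M,x)$ is both non-negative and satisfies the sign-reversing inequality. The argument abstracts the inductive step of Lemma~\ref{lem1} from the distinguished index $1$ used there to an arbitrary index $i$, but the calculations are essentially the same.

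First I would establish non-negativity coordinate by coordinate. Writing $y^i_j = x_j - \theta_i(M,x)\, M^{-1}_{ji}$, I would split on the sign of $M^{-1}_{ji}$: when $M^{-1}_{ji} \le 0$, the term $-\theta_i(M,x)\, M^{-1}_{ji}$ is non-negative (since $\theta_i(M,x) \ge 0$), so $y^i_j \ge x_j > 0$; when $M^{-1}_{ji} > 0$, the inequality $y^i_j \ge 0$ is exactly the defining property of $\theta_i(M,x)$ as the minimum of the ratios $x_j / M^{-1}_{ji}$ taken over indices $j$ with $M^{-1}_{ji} > 0$.

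Next I would compute the image $My^i$ using the identity $M\, M^{-1}_{\varindex i} = e_i$, which yields
\[
    My^i \;=\; Mx \,-\, \theta_i(M,x)\, e_i,
\]
so that $(My^i)_j = (Mx)_j$ for $j \neq i$ and $(My^i)_i = (Mx)_i - \theta_i(M,x)$. Because $x > 0$ and $x$ is sign-reversing for $M$, every coordinate already satisfies $(Mx)_j \le 0$. For $j \neq i$ the product $y^i_j (My^i)_j = y^i_j (Mx)_j$ is non-positive since $y^i_j \ge 0$. For $j = i$, the coordinate $(My^i)_i = (Mx)_i - \theta_i(M,x)$ is the sum of two non-positive quantities (in fact strictly negative once $\theta_i(M,x) > 0$), so multiplying by $y^i_i \ge 0$ again gives a non-positive product.

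The main subtlety --- not really an obstacle, but the step most worth spelling out --- is that $\theta_i(M,x)$ is well-defined and strictly positive, i.e., that the $i$-th column of $M^{-1}$ contains at least one strictly positive entry. This is implicit in the hypothesis that $x$ can be reduced at $i$: without such a positive entry the minimum in the definition of $\theta_i(M,x)$ would be vacuous. Once that is in hand, the verification above is one matrix--vector identity plus elementary sign casework.
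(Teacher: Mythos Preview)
Your proof is correct and follows essentially the same approach as the paper: the paper simply asserts $y^i \ge 0$ ``by construction'' and then computes $My^i = Mx - \theta_i(M,x)e_i \le 0$, whereas you spell out the coordinate-wise casework behind both steps. Your version is more explicit (and in particular correctly interprets $\theta_i(M,x)$ as $\min_{j:M^{-1}_{ji}>0} x_j/M^{-1}_{ji}$, fixing what appears to be a typo in the paper's definition), but the underlying argument is identical.
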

\begin{proof}
    Fix any $i\in [n]$. We first observe that $y^i = x - \theta_i(M,x)M^{-1}_{\varindex i} \geq 0$ by construction. Moreover,
    \begin{align*}
        My^i = Mx - M\theta_i(M,x)M^{-1}_{\varindex i} = Mx - \theta_i(M,x)e_i \leq 0
    \end{align*} since $\theta_i(M,x) > 0$ and $Mx \leq 0$.
\end{proof}

\begin{lemma}
    If $M\in \R^{n\times n}, n \geq 2$ is a reducible matrix and $x\in \R^n_{>0}$ is a strictly-positive sign-reversing vector, then for at least one $i\in [n]$, $y^i\coloneqq \reduce_i(M,x)$ is not the zero vector. \label{lem:y^i not all zero}
\end{lemma}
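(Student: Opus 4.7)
The plan is to argue by contradiction: assume $y^i = 0$ for every $i \in [n]$ and derive that $M^{-1}$ is forced to have rank at most $1$, which is incompatible with the invertibility of $M^{-1}$ when $n \geq 2$.

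First I would unpack what the equation $y^i = 0$ actually says. By definition, $\reduce_i(M,x) = x - \theta_i(M,x)\, M^{-1}_{\bullet i}$, so $y^i = 0$ is equivalent to $M^{-1}_{\bullet i} = \theta_i(M,x)^{-1}\, x$. Because $M$ is reducible, the $i$-th column of $M^{-1}$ contains at least one positive entry, so the minimum defining $\theta_i(M,x)$ is taken over a nonempty set of strictly positive ratios (using $x > 0$); hence $\theta_i(M,x) > 0$. Consequently, each $M^{-1}_{\bullet i}$ is a \emph{positive} scalar multiple of the single vector $x$.

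Now suppose, for contradiction, that every $y^i$ vanishes. Then all $n$ columns of $M^{-1}$ lie in the one-dimensional subspace $\R \cdot x \subseteq \R^n$, so $M^{-1}$ has rank at most $1$. But $M$ is invertible (part of the reducibility hypothesis), so $M^{-1}$ is invertible and has rank $n \geq 2$, a contradiction. Therefore at least one $y^i$ must be nonzero.

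The proof is essentially a one-line rank argument once the geometric content of $y^i = 0$ is extracted, and I do not foresee any substantive obstacle. The only point to verify carefully is that $\theta_i(M,x)$ is strictly positive, so that "positive scalar multiple" is the correct statement; this relies on both the reducibility of $M$ (to guarantee the index set in the $\min$ is nonempty) and the strict positivity of $x$ (to guarantee every ratio $x_j / M^{-1}_{ji}$ is positive).
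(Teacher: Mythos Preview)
Your proof is correct and follows essentially the same approach as the paper: both argue by contradiction that if every $y^i$ vanished then each column of $M^{-1}$ would be a scalar multiple of $x$, contradicting the invertibility of $M^{-1}$ for $n\geq 2$. The paper phrases the contradiction as ``two columns of $M^{-1}$ are linearly dependent'' while you phrase it as ``$M^{-1}$ has rank at most $1$,'' but these are the same observation.
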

\begin{proof}
    Suppose $y^i = 0$ for all $i\in [n]$. Fix any distinct $i,j \in [n]$. Let $\theta^i \coloneqq \min \Set{x_i/M^{-1}_{ki}: M^{-1}_{ki} > 0}$, $\theta^j \coloneqq \min \Set{x_j/M^{-1}_{kj} : M^{-1}_{ki} > 0}$. Then \[x - \theta^i M^{-1}_{\varindex i} = y^i = 0 = y^j = x - \theta^j M^{-1}_{\varindex j}\] and $M^{-1}_{\varindex i} = (\theta^j / \theta^i) M^{-1}_{\varindex j}$. But then two columns of $M^{-1}$ are linearly dependent which contradicts the existence of $M^{-1}$.
\end{proof}

The significance of this operation is that it allows us to produce a sequence of strictly positive sign-reversing vectors with decreasing support:
\begin{lemma}
    Suppose $M\in \R^{n\times n}$ is a matrix, and let $x > 0$ be a sign-reversing vector for $M$. Then either there exists an $i\in [n]$ such that $y^i \coloneqq \reduce_i(M,x)$ is not the zero vector, or $\det(M) = 0$.
\end{lemma}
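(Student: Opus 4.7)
The plan is a case split on $\det(M)$. If $\det(M) = 0$, the second alternative in the conclusion holds and there is nothing to prove, so the substantive case is when $M$ is invertible, in which I must exhibit some $i$ with $y^i \ne 0$.

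Assuming $M$ is invertible, I would follow the argument of Lemma~\ref{lem:y^i not all zero} almost verbatim. Suppose for contradiction that $y^i = 0$ for every index $i \in [n]$ at which the reduction is defined. By the definition of $\reduce_i$, this says $x = \theta_i(M, x)\, M^{-1}_{\varindex i}$ for each such $i$, so column $i$ of $M^{-1}$ is a strictly positive scalar multiple of the fixed vector $x$. If this happens for two distinct indices $i \ne i'$, the two columns $M^{-1}_{\varindex i}$ and $M^{-1}_{\varindex i'}$ are linearly dependent, contradicting invertibility of $M^{-1}$ (and hence of $M$).

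The main obstacle, I expect, is the bookkeeping around indices $i$ at which $\reduce_i(M, x)$ is not straightforwardly defined (i.e., column $i$ of $M^{-1}$ contains no positive entry), together with the base case $n = 1$. For $n = 1$, $x > 0$ and $Mx \le 0$ with $M$ invertible force $M < 0$; under the natural convention that an empty minimum in the definition of $\reduce_i$ leaves $x$ unchanged, $y^1 = x \ne 0$ and we are done. For $n \ge 2$, I would verify that the sign-reversing hypothesis together with invertibility guarantees that enough columns of $M^{-1}$ carry a strictly positive entry for the pigeonhole step above to go through; this should follow by expressing $x = M^{-1}(Mx)$ and doing a coordinatewise sign analysis, using that each $x_k > 0$ is written as a sum with nonpositive $(Mx)_j$ so that enough entries of $M^{-1}$ must have matching signs.
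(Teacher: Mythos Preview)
Your approach is sound and, in fact, more complete than the paper's own proof, which is visibly truncated: it announces induction on $n$, handles $n=1$ (where, incidentally, the conclusion $\det(M)<0$ does not match the lemma's stated disjunct $\det(M)=0$), and then for $n>1$ writes only ``assume that $y^i=0$ for all $i\in[n]$'' and stops. Your direct argument---split on whether $\det(M)=0$, and in the invertible case reuse the linear-dependence contradiction of Lemma~\ref{lem:y^i not all zero}---is correct and cleaner than an induction.

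The convention you propose for an empty minimum (so that $y^i=x$ whenever column $i$ of $M^{-1}$ has no positive entry) is exactly what is needed to make every $y^i$ defined, and once you adopt it the sign analysis you sketch at the end becomes unnecessary: either some column of $M^{-1}$ has no strictly positive entry, in which case $y^i=x\neq 0$ for that $i$ and you are done immediately, or every column has one, and then the two-column argument of Lemma~\ref{lem:y^i not all zero} applies verbatim for $n\ge 2$. The same convention also places the $n=1$ case into the first disjunct ($y^1=x\neq 0$), repairing the mismatch in the paper's base case.
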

\begin{proof} We'll use induction on $n$. For $n=1$, $x > 0$ and $Mx < 0$ implies that $\det(M) = M_{11} < 0$. For $n > 1$ assume that $y^i = 0$ for all $i\in [n]$.
\end{proof}

\begin{theorem}
    Given a certificate of type \ref{PV2*}, there is a polynomial time procedure (shown in Algorithm~\ref{alg:find nonpositive minor}) to obtain a certificate of type \ref{PV1}.
\end{theorem}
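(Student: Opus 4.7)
My plan is to prove the theorem by induction on $n$, turning the argument into an explicit polynomial-time recursion. The base case $n = 1$ is immediate: a positive scalar $x$ with $A_{11} x \leq 0$ forces $A_{11} \leq 0$, so $\{1\}$ witnesses \ref{PV1}.

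For the inductive step, I would first compute the $n+1$ principal minors $\det(A)$ and $\det(\psm{A}{[n]-i})$ for $i \in [n]$. If any one of them is non-positive, output the corresponding subset as a \ref{PV1} witness and halt. Otherwise $A$ is invertible with $\det(A) > 0$, and the cofactor formula
\[ (A^{-1})_{ii} = \frac{\det(\psm{A}{[n]-i})}{\det(A)} > 0 \]
gives a positive entry in each column of $A^{-1}$. This is exactly what is required for $x$ to be reducible at every index $i$, so Lemma \ref{lem:y^i not all zero} applies.

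Picking an index $i$ for which $y^i \coloneqq \reduce_i(A, x)$ is non-zero, let $\alpha \coloneqq \psupp(y^i)$, which is a proper subset of $[n]$ because the $\reduce_i$ operation zeroes out at least one coordinate by construction. Lemma \ref{lem:reduction preserves sign-reversing} shows that $y^i$ is a non-negative sign-reversing vector for $A$, and Observation \ref{obs:strictly positive} then gives that $y^i_\alpha$ is a strictly-positive sign-reversing vector for $\psm{A}{\alpha}$. Recursing on the strictly smaller instance $(\psm{A}{\alpha}, y^i_\alpha)$ produces some $\beta \subseteq \alpha$ with $\det(\psm{\psm{A}{\alpha}}{\beta}) = \det(\psm{A}{\beta}) \leq 0$, which is the desired \ref{PV1} witness for $A$.

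The recursion terminates in at most $n$ rounds since the support size strictly decreases at each call, and every round performs only $O(n)$ determinant computations, one matrix inversion, and a scan for a non-zero $y^i$, all of which are polynomial. The main conceptual hinge is the cofactor identity linking positivity of the near-full principal minors to positivity of the diagonal entries of $A^{-1}$; once this is in hand, I do not anticipate any substantive obstacle beyond carefully tracking how the support sets compose across recursive calls.
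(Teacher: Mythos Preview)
Your proposal is correct and follows essentially the same approach as the paper's proof: check the determinant of the current principal submatrix and all its size-one-smaller principal submatrices, return any non-positive one, and otherwise use Lemma~\ref{lem:y^i not all zero} to find a non-zero reduced vector and recurse on its support. Your explicit invocation of the cofactor identity $(A^{-1})_{ii} = \det(\psm{A}{[n]-i})/\det(A) > 0$ is a nice addition that the paper leaves implicit in its definition of ``reducible,'' but the argument is otherwise the same in structure, correctness invariants, and running-time analysis.
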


\begin{algorithm}
    \caption{Algorithm for converting from \ref{PV2*} to \ref{PV1}} \label{alg:find nonpositive minor}
    \begin{algorithmic}
        \Require{$M \in \R^{n\times n}$, $\alpha \subseteq [n]$, a strictly-positive sign-reversing vector $x \in \R_{>0}^\alpha$ for $\psm{M}{\alpha}$}
        \Ensure{A subset $\beta\subseteq \alpha$ such that $\det(\psm{M}{\beta}) \leq 0$}
        \Procedure{RecursiveFindNonpositiveMinor}{$M,\alpha,x$}
        \State \textbf{if} $\Card{\alpha} = 1$ or $\det(\psm{M}{\alpha}) \leq 0$, \Return{$\alpha$}
        \For{$i\in \alpha$}
        \State \textbf{if} $\det(\psm{M}{\alpha - i}) \leq 0$, \Return{$\alpha - i$}
        \State $y^i \gets \reduce_i(\psm{M}{\alpha}, x)$.
        \EndFor
        \State $i_* \gets \min\Set{i\in \alpha: y^i \neq 0}$.
        \State $y \gets y^{i_*}$.
        \State $\beta \gets \psupp(y)$.
        \State \Return{\textsc{RecursiveFindNonpositiveMinor}$(M,\beta, y_{\beta})$}
        \EndProcedure

        \Procedure{FindNonpositiveMinor}{$M,x$}
        \State \Return{\textsc{RecursiveFindNonpositiveMinor}$(M,[n],x)$}
        \EndProcedure
    \end{algorithmic}
\end{algorithm}

\begin{proof}
    First, the algorithm computes at most $(1+n) + (1 + (n-1)) + \dotsb + 1 = O(n^2)$ determinants across the at-most $n$ recursive calls, each of which can be done in $\poly(n)$ time.

    To establish correctness, we observe that if $x$ is a strictly positive sign-reversing vector for $M_{\alpha,\alpha}$, then $i_*$ is guaranteed to exist by Lemma~\ref{lem:y^i not all zero}, and by Lemma~\ref{lem:reduction preserves sign-reversing}, $y$ is a non-negative sign-reversing vector for $\psm{M}{\alpha}$. By construction, $y_{\beta}$ is strictly positive and sign-reversing on $\psm{M}{\beta}$, proving that the recursive call preserves the invariants. Finally, we observe that we only return from the algorithm if we've directly witnessed a subset $\beta$ with $\det(\psm{M}{\beta}) \leq 0$ or if $\Card{\alpha} = 1$. In the latter case, $\det(\psm{M}{\alpha}) = M_{ii}$ and $x$ being a sign-reversing vector means that $M_{ii}x_i \leq 0$ while $x_i > 0$, so $M_{ii} \leq 0$ and $\det(\psm{M}{\alpha}) \leq 0$.
\end{proof}

\section{Converting from a two-distinct sinks witness} Fix some non-degenerate $M$, and a vector $q \in \R^{d}$. Now for any $\alpha \subseteq d$, there is a unique pair $(w;z)$ such that $Iw - Mz = q$ and $\supp(w) \subseteq \alpha$ and $\supp(z) \subseteq [n]\setminus \alpha$. Note that we may not have $w,z\geq 0$ for a given $\alpha$.

Now assume we are given subsets $\alpha, \beta \subseteq [n]$ such that
\[ [\Char(\alpha) \xor \Char(\beta)] \cap [\outmap(\alpha) \xor \outmap(\beta)] = 0^n\text{.} \]

Let $(w^{(\alpha)}; z^{(\alpha)})$ and $(w^{(\beta)}; z^{(\beta)})$ be the solutions induced by $\alpha$ and $\beta$ respectively.
\section{Non-positive minors and multiples sinks}\label{sec:minor2uso}
\subsection{Non-positive minors implies Unique Sink Orientation Violation}
We begin by obtaining from our non-positive minor another non-positive minor containing a strictly positive minor of cardinality one smaller:
\begin{lemma}\label{lem:minimalWitness} Given a violation of type \ref{PV1}, i.e. a subset $\alpha$ such that $\det(\psm{A}{\alpha}) \leq 0$, we can find a subset $\alpha' \subseteq \alpha$ so that either \begin{itemize}
        \item $\det(\psm{A}{\alpha'}) = 0$, or
        \item $\det(\psm{A}{\alpha'}) < 0$ and for some $i \in \alpha'$, $\det(\psm{A}{\alpha' - i}) > 0$ in time $O(n^5)$.
    \end{itemize}
\end{lemma}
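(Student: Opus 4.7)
The plan is to iteratively shrink $\alpha$ while maintaining the invariant that $\det(\psm{A}{\alpha'}) \leq 0$. I would initialize $\alpha' \gets \alpha$ and, at each iteration, first check whether $\det(\psm{A}{\alpha'}) = 0$; if so, return $\alpha'$ as a witness to the first bullet. Otherwise $\det(\psm{A}{\alpha'}) < 0$, and I would compute $\det(\psm{A}{\alpha' - i})$ for every $i \in \alpha'$. If any of these determinants is strictly positive, return $\alpha'$ as a witness to the second bullet. If any is zero, return $\alpha' - i$ for such an $i$ as a witness to the first bullet. Otherwise every $\det(\psm{A}{\alpha' - i}) < 0$; pick any $i \in \alpha'$, update $\alpha' \gets \alpha' - i$, and iterate. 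This is essentially a linear search for the boundary at which the sign of the principal minor first becomes nonnegative when we enlarge $\alpha'$ by one element.

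For termination, each iteration strictly decreases $\Card{\alpha'}$ by one, so after at most $n - 1$ iterations we have either already returned or reached $\Card{\alpha'} = 1$, say $\alpha' = \Set{j}$. At that point $\det(\psm{A}{\alpha'}) = A_{jj} < 0$ by the maintained invariant, while $\det(\psm{A}{\alpha' - j}) = \det(\psm{A}{\emptyset}) = 1 > 0$ under the usual convention for the empty product; hence the second-bullet branch must fire and return $\Set{j}$. So the procedure always halts with a valid witness of one of the two prescribed forms.

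For the running time, each of the at most $n$ iterations performs at most $n + 1$ determinant computations on principal submatrices of size at most $n$, and each determinant can be computed in $O(n^3)$ time by Gaussian elimination. This gives $O(n \cdot n \cdot n^3) = O(n^5)$ arithmetic operations in total, matching the claimed bound. The one point that deserves explicit verification is that the iterative case really does preserve the invariant, but this is immediate from the case split: we only descend to $\alpha' - i$ when $\det(\psm{A}{\alpha' - i}) < 0$, which directly reinstates the hypothesis of the next round. No step of this argument looks like a serious obstacle; the main content is the observation that, once all proper one-step shrinkages are themselves nonpositive-minor witnesses, we can replace $\alpha'$ by any of them without loss.
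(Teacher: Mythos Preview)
Your proposal is correct and follows essentially the same approach as the paper: iteratively shrink $\alpha'$ while maintaining $\det(\psm{A}{\alpha'})\le 0$, returning immediately upon finding a zero determinant or a one-step shrinkage with positive determinant, with the base case handled by $\det(\psm{A}{\varnothing})=1$. Your write-up is in fact slightly more explicit than the paper's about the case split and the invariant, and your $O(n^5)$ count matches the paper's bound of at most $n^2$ determinant computations at $O(n^3)$ each.
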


\begin{proof} If at any point we compute a determinant for some subset $\alpha'$ and obtain $\det(\psm{A}{\alpha'}) = 0$, we return $\alpha'$.
    In the event that $\alpha$ has exactly one element, then we can return $\alpha$, as $\det(\psm{A}{\alpha}) \le 0$ by assumption, and the only proper subset of $\alpha$ is $\varnothing$, where $\det(\psm{A}{\varnothing}) = 1$.

    Otherwise, for each $i \in \alpha$, compute $\det(\psm{A}{\alpha - i})$, and if it is positive, return $\alpha$ and $i$. Otherwise, recursively apply this algorithm for $\alpha = \alpha - i$, for any $i \in \alpha$.

    Each determinant computation requires $O(n^3)$ time, and we compute at most $n^2$ determinants.
\end{proof}

We can now proceed to the main reduction.

\begin{theorem}
    We can convert a violation of type \ref{PV1} to a violation of type \ref{PV3} in polynomial time.
\end{theorem}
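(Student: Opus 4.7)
\medskip

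My plan is to first invoke Lemma~\ref{lem:minimalWitness} to reduce to a very structured witness, and then to build the USO-violation essentially by Cramer's rule. Applying the lemma to the given $\alpha$ yields a subset $\alpha' \subseteq \alpha$ of one of two types. If $\det(\psm{M}{\alpha'}) = 0$, then by Lemma~\ref{lem:nondegenerate} the matrix $\amat{\alpha'}{M}$ is singular, and this is directly a witness of the first kind for \ref{PV3}. So I only need to handle the second case: $\det(\psm{M}{\alpha'}) < 0$ and there exists $i \in \alpha'$ with $\det(\psm{M}{\alpha' - i}) > 0$. In this case I will set $\alpha \coloneqq \alpha'$ and $\beta \coloneqq \alpha' - i$, so that $\Char(\alpha) \xor \Char(\beta)$ is supported only at coordinate $i$, and the condition in \ref{PV3} reduces to showing that for some $q \in \R^n$ we have $\outmap_q(\alpha, M)_i = \outmap_q(\beta, M)_i$.

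The key sign computation is this. By Lemma~\ref{lem:nondegenerate} (and the block-triangular formula in its proof), $\det(\amat{\alpha}{M}) = (-1)^{|\alpha|}\det(\psm{M}{\alpha})$ and $\det(\amat{\beta}{M}) = (-1)^{|\alpha|-1}\det(\psm{M}{\beta})$. Because $\det(\psm{M}{\alpha}) < 0$ and $\det(\psm{M}{\beta}) > 0$, the extra sign flip from $|\alpha|$ vs.\ $|\alpha|-1$ exactly cancels the sign difference in the minors, so $\det(\amat{\alpha}{M})$ and $\det(\amat{\beta}{M})$ have the same (nonzero) sign. In particular both $\amat{\alpha}{M}$ and $\amat{\beta}{M}$ are invertible, so $\outmap_q$ is defined at both sets.

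Now I pick $q \coloneqq -M_{\varindex i}$, which is precisely the $i$-th column of $\amat{\alpha}{M}$. Then $\amat{\alpha}{M}^{-1} q = e_i$, so $(\amat{\alpha}{M}^{-1} q)_i = 1 > 0$, giving $\outmap_q(\alpha, M)_i = 1$. For $\beta$, note that $\amat{\beta}{M}$ agrees with $\amat{\alpha}{M}$ in every column except the $i$-th (whose entries are $e_i$ and $-M_{\varindex i}$ respectively). Thus replacing the $i$-th column of $\amat{\beta}{M}$ by $q = -M_{\varindex i}$ yields exactly $\amat{\alpha}{M}$, and Cramer's rule gives
\[
  (\amat{\beta}{M}^{-1} q)_i = \frac{\det(\amat{\alpha}{M})}{\det(\amat{\beta}{M})} > 0,
\]
so $\outmap_q(\beta, M)_i = 1$ as well. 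Hence the $i$-th bits of the two outmaps agree, and since $\Char(\alpha) \xor \Char(\beta)$ is supported only on $\{i\}$, we get $[\Char(\alpha)\xor\Char(\beta)]\cap[\outmap_q(\alpha,M)\xor\outmap_q(\beta,M)] = 0^n$, a valid \ref{PV3} certificate.

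The whole procedure is polynomial: Lemma~\ref{lem:minimalWitness} already runs in $O(n^5)$ time, and the remainder is just selecting $\alpha$, $\beta$, and writing down $q = -M_{\varindex i}$. The main place to be careful is the sign argument tying $\det(\amat{\alpha}{M})$ and $\det(\amat{\beta}{M})$ together; this is the one place where the hypotheses $\det(\psm{M}{\alpha}) < 0$ and $\det(\psm{M}{\alpha-i}) > 0$ supplied by Lemma~\ref{lem:minimalWitness} are used in an essential way, and it is why that lemma had to do the work of locating a ``boundary'' between positive and nonpositive principal minors before the Cramer-rule step could succeed.
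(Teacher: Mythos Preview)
Your argument is correct and follows the same skeleton as the paper's proof: apply Lemma~\ref{lem:minimalWitness}, dispose of the singular case via Lemma~\ref{lem:nondegenerate}, take $\beta=\alpha-i$, and use Cramer's rule together with the block-determinant identity $\det(\amat{\gamma}{M})=(-1)^{|\gamma|}\det(\psm{M}{\gamma})$ to force $\outmap_q(\alpha,M)_i=\outmap_q(\beta,M)_i$.

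The one genuine difference is in the choice of $q$. The paper shows that the pair $(\alpha,\beta)$ works for \emph{every} nondegenerate $q$, and then spends a page constructing an explicit rational nondegenerate $q$ (taking $q_j=2^{-4j\sigma}$ and bounding minors via a bit-complexity argument). You instead pick the single vector $q=-M_{\varindex i}$, the $i$th column of $\amat{\alpha}{M}$, so that $\amat{\alpha}{M}^{-1}q=e_i$ and, since $\amat{\beta}{M}$ with its $i$th column replaced by $q$ is exactly $\amat{\alpha}{M}$, Cramer's rule gives $(\amat{\beta}{M}^{-1}q)_i=\det(\amat{\alpha}{M})/\det(\amat{\beta}{M})>0$ directly. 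This sidesteps the nondegeneracy construction entirely: since \ref{PV3} places no nondegeneracy requirement on $q$ and only the $i$th coordinate of the two outmaps matters, your explicit $q$ suffices. The paper's version proves the slightly stronger statement that $(\alpha,\beta)$ is a witness for a full-measure set of $q$'s, but for the theorem as stated your shortcut is cleaner and cheaper.
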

\begin{proof}

    Given a subset $\alpha$ with $\det(\psm{M}{\alpha'}) \leq 0$ as in \ref{PV1}, we wish to either find an $\alpha$ so that $\amat{\alpha}{M}$ is not invertible, or find a vector $q\in \R^n$ and two distinct sets $\alpha, \beta \subseteq [n]$ such that
    \[\Brack{\Char(\alpha) \xor \Char(\beta)} \cap \Brack{\outmap_q(\alpha, M) \xor \outmap_q(\beta, M)} = 0^n\text{.} \]

    We begin by applying the procedure in Lemma~\ref{lem:minimalWitness}. If we obtain an $\alpha'$ such that $\det(\psm{A}{\alpha'}) = 0$, we can return $\alpha'$ since $\det(\amat{\alpha'}{M}) = (-1)^{\Card{\alpha'}}\det(\psm{M}{\alpha'}) = 0$. If instead we obtain an $\alpha'$ and $i$ such that $\det(\psm{M}{\alpha'}) < 0$ and $\psm{M}{\alpha' - i} > 0$, we continue as follows.

    We now claim that if we take $\alpha'$ and $i$ as in the above lemma, and let $\beta \coloneqq \alpha' - i$, we will in fact obtain that $\alpha'$ and $\beta$ satisfy our desired property for any nondegenerate $q \in \R^n$. To simplify notation slightly, we'll rename $\alpha'$ to $\alpha$ and forget the original $\alpha$ we were given.

    To see this, it will be helpful to first divide $\amat{\alpha}{M}$ and $\amat{\beta}{M}$ into blocks. By reordering the basis elements in our representation, we will take $\alpha$ to be the first $|\alpha|$ rows.
    \[
        \amat{\alpha}{M} =
        \begin{pmatrix}
            -M|_{\beta}            & -M_{\beta i}       & 0                 \\
            -M_{\bar{\beta} \beta} & -M_{\bar{\beta} i} & I_{|\bar{\beta}|}
        \end{pmatrix}
    \]
    \[
        \amat{\beta}{M} =
        \begin{pmatrix}
            -M|_{\beta}            & 0                 & 0                 \\
            -M_{\bar{\beta} \beta} & e_{|\bar{\beta}|} & I_{|\bar{\beta}|}
        \end{pmatrix}
    \]

    As in the proof of Lemma \ref{lem:nondegenerate}, we can compute the determinants as
    $\det(\amat{\alpha}{M}) = (-1)^{|\alpha|}\det(\psm{M}{\alpha})$, and
    $\det(\amat{\beta}{M})= (-1)^{|\beta|}\det(\psm{M}{\beta})$.

    Now, by Cramer's rule, we can compute

    \[
        (\amat{\alpha}{M}^{-1}q)_i =
        \frac{1}{\det(\amat{\alpha}{M})}
        \begin{pmatrix}
            -M|_{\beta}            & q_{\beta}         & 0                 \\
            -M_{\bar{\beta} \beta} & q_{\bar{\beta} i} & I_{|\bar{\beta}|}
        \end{pmatrix}
    \]
    \[
        (\amat{\beta}{M}^{-1}q)_i =
        \frac{1}{\det(\amat{\beta}{M})}
        \begin{pmatrix}
            -M|_{\beta}            & q_{\beta}         & 0                 \\
            -M_{\bar{\beta} \beta} & q_{\bar{\beta} i} & I_{|\bar{\beta}|}
        \end{pmatrix}
    \]

    In particular, assuming that $q$ is nondegenerate,
    \begin{align*}
        \frac{(\amat{\alpha}{M}^{-1}q)_i}{(\amat{\beta}{M}^{-1}q)_i}
         & =\frac{\det(\amat{\beta}{M})}{\det(\amat{\beta}{M})}                         \\
         & =(-1)^{|\alpha| - |\beta|}\frac{\det(\psm{M}{\alpha})}{\det(\psm{M}{\beta})} \\
         & = \frac{-\det(\psm{M}{\alpha})}{\det(\psm{M}{\beta})}                        \\
         & >0
    \end{align*}
    where this last inequality is based on our choice of $\alpha$ and $\beta$.

    In particular $\out_q(\alpha, M)_i = \out_q(\beta, M)_i$. Thus, we obtain
    \[
        \Char(\alpha) \oplus \Char(\beta) = e_i; \quad
        (\out_q(\alpha, M) \oplus \out_q(\beta, M))_i = 0.
    \]
    Therefore, $(\Char(\alpha) \oplus \Char(\beta)) \cap (\out_q(\alpha, M) \oplus \out_q(\beta, M))_i = 0^n$.

    Finally, assuming that $M$ is nondegenerate, and the entries of $M$ are rational numbers, we can find a nondegenerate $q$ in polynomial time. To do this, we rely on a lemma from \cite{schrijver1998theory}.
    \begin{lemma}
        Suppose that $X$ is an $n\times n $ matrix of rational numbers, and that $X$ can be expressed in binary using at most $\sigma$ bits (by expressing the numerator and denominator of each entry of $X$ in binary). Then $\det(X)$ is a rational number that can be expressed in binary using at most $2\sigma$ binary bits.
    \end{lemma}
    In particular, if $X$ can be expressed using at most $\sigma$ bits, then $\det(X)$ is a rational number whose denominator is at most $\frac{1}{2^{2\sigma}}$.

    We see therefore that for any (possibly nonprincipal) submatrix of $M$ whose size is bounded by $\sigma$, $M_{\alpha \beta}$, either $|\det(M_{\alpha \beta})| \ge \frac{1}{2^{2\sigma}}$ or $\det(M_{\alpha \beta})=0$. Moreover, $|\det(M_{\alpha \beta})| < 2^{2\sigma}$.

    Consider therefore the vector $q$ where $q_i = 2^{-4i\sigma}$. This can clearly be computed in polynomial time, since each entry can easily be expressed in binary using at most $4d\sigma$ bits. We claim that as long as $M$ is nondegenerate, $q$ is in fact nondegenerate for $M$. This is true essentially because the entries of $q$ decrease rapidly enough that as long as $\det(M|_{\alpha}) \neq 0$ for all $\alpha$, it is not possible for $(\amat{\alpha}{M}^{-1}q)_i = 0$.

    Formally, we use Cramer's rule to state that
    \[
        (\amat{\alpha}{M}^{-1}q)_i = \frac{(-1)^{|\alpha|}}{\det(M_{\alpha})}\sum_{j=1}^d \det(M_{\alpha-i \alpha-j})q_j.
    \]
    Suppose that this is 0.

    From our assumption that $\det(M_{\alpha-i \alpha-i}) \neq 0$, we know that for at least 1 $j$, $\det(M_{\alpha-i, \alpha-j}) \neq 0$, so we take $j^*$ to be the smallest $j$ so that $\det(M_{\alpha-i, \alpha-j}) \neq 0$, and then notice that
    \begin{align*}
        \Abs{\sum_{j=1}^d \det(M_{\alpha-i, \alpha-j})q_j}
         & =\Abs{\sum_{j=j^*}^d \det(M_{\alpha-i, \alpha-j})q_j}                                                 \\
         & \ge\Abs{\det(M_{\alpha-i, \alpha-j^*})q_{j^*}}-\Abs{\sum_{j=j^*+1}^d \det(M_{\alpha-i, \alpha-j})q_j} \\
         & \ge\Abs{2^{-4(j^*+\frac{1}{2})\sigma}}-\Abs{\sum_{j=j^*+1}^d 2^{-2(j-\frac{1}{2})\sigma}}             \\
         & >0
    \end{align*}
\end{proof}

\subsection{Unique Sink Orientation Violation implies Negative Minor}
\begin{theorem}
    We can convert a violation of type \ref{PV3} to a violation of type \ref{PV1} in polynomial time.
\end{theorem}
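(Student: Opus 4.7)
The strategy is to reduce the \ref{PV3} witness to a \ref{PV2} witness (a sign-reversing vector) and then invoke Algorithm~\ref{alg:find nonpositive minor} of Section~\ref{sec:minor2vector} to extract a \ref{PV1} witness in polynomial time. First, dispatch the trivial sub-case: if the \ref{PV3} witness is just a set $\alpha$ with $\amat{\alpha}{M}$ singular, then Lemma~\ref{lem:nondegenerate} gives $\det(\psm{M}{\alpha}) = 0$, so $\alpha$ itself witnesses \ref{PV1}.

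In the two-set sub-case with witness $(q, \alpha, \beta)$, compute $u \coloneqq \amat{\alpha}{M}^{-1} q$ and $v \coloneqq \amat{\beta}{M}^{-1} q$; these are the two ``complementary solutions'' associated with the bases $\alpha$ and $\beta$. Build $x \in \R^n$ by setting $x_j = v_j - u_j$ on $\alpha \cap \beta$, $x_j = -u_j$ on $\alpha \setminus \beta$, $x_j = v_j$ on $\beta \setminus \alpha$, and $x_j = 0$ on $[n] \setminus (\alpha \cup \beta)$. Expanding the identity $\amat{\alpha}{M} u - \amat{\beta}{M} v = 0$ row by row yields $Mx = y$ for a companion vector $y$ that vanishes on $\alpha \cap \beta$ and on $[n] \setminus (\alpha \cup \beta)$, with $y_i = v_i$ on $\alpha \setminus \beta$ and $y_i = -u_i$ on $\beta \setminus \alpha$. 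Hence $x_i(Mx)_i = x_i y_i$ is zero off $\alpha \oplus \beta$, and on $\alpha \oplus \beta$ equals $-u_i v_i$. The intersection condition from \ref{PV3} is exactly that $\outmap_q(\alpha, M)_i = \outmap_q(\beta, M)_i$ for all $i \in \alpha \oplus \beta$, which under the convention $\outmap = 1$ iff the coordinate is strictly positive forces $u_i$ and $v_i$ to be simultaneously positive or simultaneously nonpositive; either way $u_i v_i \geq 0$, so $x_i(Mx)_i \leq 0$ and $x$ is sign-reversing.

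Provided $x \neq 0$, feeding $x$ into Algorithm~\ref{alg:find nonpositive minor} completes the reduction. The main obstacle is the degenerate case $x = 0$: direct inspection (combining $x = 0$ with $Mx = y = 0$) shows this forces $u \equiv v$ everywhere, with both vectors vanishing throughout $\alpha \oplus \beta$, pinning $q$ to a proper subvariety of codimension $|\alpha \oplus \beta|$ determined by $M$, $\alpha$, and $\beta$. We escape it via a small perturbation $q \mapsto q + \epsilon w$ for an appropriately chosen direction $w$ and sufficiently small $\epsilon > 0$. All strict sign relations among entries of $u$ and $v$ are stable under such a perturbation; on the coordinates of $\alpha \oplus \beta$ where $u_i = v_i = 0$ originally, one picks $w$ so that $\amat{\alpha}{M}^{-1} w$ and $\amat{\beta}{M}^{-1} w$ have compatible signs at those indices, which is a polynomial-sized linear feasibility problem. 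After the perturbation the \ref{PV3} intersection condition still holds, and the construction above produces a nonzero sign-reversing vector, allowing us to conclude as in the non-degenerate case.
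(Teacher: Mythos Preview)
Your main construction is correct and takes a genuinely different route from the paper. The paper first normalizes the witness so that $\alpha\cap\beta=\varnothing$, $\alpha\cup\beta=[n]$, and both $\alpha,\beta$ are sinks, and then iteratively walks $\beta$ towards $\alpha$ one coordinate at a time via a geometric lemma (Lemma~\ref{lem:newsink}), reading off a nonpositive principal minor at the step where the walk fails. Your approach instead goes straight to a \ref{PV2} witness: from $u=\amat{\alpha}{M}^{-1}q$ and $v=\amat{\beta}{M}^{-1}q$ you assemble $x$ and verify $x_i(Mx)_i=-u_iv_i\le 0$ on $\alpha\oplus\beta$ using exactly the outmap-agreement hypothesis, then hand $x$ off to the \ref{PV2}$\to$\ref{PV1} pipeline of Section~\ref{sec:minor2vector}. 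This is shorter and more transparent than the paper's argument. One small slip: $y=Mx$ does not vanish on $[n]\setminus(\alpha\cup\beta)$; there $y_j=v_j-u_j$. This is harmless since $x_j=0$ on that set, but the sentence as written is inaccurate.

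The real gap is in your handling of the degenerate case $x=0$. First, the compatibility condition you need on the perturbation direction $w$, namely that $(\amat{\alpha}{M}^{-1}w)_i$ and $(\amat{\beta}{M}^{-1}w)_i$ share a sign for every $i\in\alpha\oplus\beta$, is a coordinatewise product-sign constraint, not a system of linear inequalities; calling it a ``polynomial-sized linear feasibility problem'' is incorrect, and trying all $2^{|\alpha\oplus\beta|}$ sign patterns is not polynomial. Second, and more seriously, such a $w$ need not exist at all: the degenerate case $x=0$ forces $u_i=v_i=0$ for every $i\in\alpha\oplus\beta$, and with $q=0$ this configuration occurs for \emph{any} pair $\alpha\neq\beta$ even when $M$ is a P-matrix (both outmaps are $0^n$, so the \ref{PV3} intersection condition is vacuously met). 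In that situation no \ref{PV1} witness exists, so no perturbation can save you. The clean fix is to observe that $x=0$ cannot occur when $q$ is nondegenerate in the paper's sense (every coordinate of every $\amat{\gamma}{M}^{-1}q$ nonzero), since then $u_i\neq 0$ on $\alpha\setminus\beta$ forces $x\neq 0$; the paper's \ref{PV3} witness implicitly carries this nondegeneracy (and the \ref{PV1}$\to$\ref{PV3} direction explicitly constructs such a $q$). Under that assumption your perturbation step is simply unnecessary, and your argument is complete as it stands.
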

\begin{proof}
    Suppose we are given a subset $\alpha$ such that $\det(\amat{\alpha}{M}) = 0$. In that case, we can return $\alpha$ immediately.

    Suppose instead we are given as input a pair of subsets $\alpha, \beta \subseteq [d]$ and a vector $q\in \R^d$  with the property that $(\Char(\alpha) \oplus \Char(\beta)) \cap (\out_q(\alpha, M) \oplus \out_q(\beta, M)) = 0^d$.

    Let $S = \alpha \cap \beta$, and let $T = \alpha \cup \beta$. We want to reduce to the case when $S = \varnothing$ and $T = [n]$. One way to think about this in terms of unique sink orientations is that we are restricting our attention to the minimal face of the hypercube containing both $\alpha$ and $\beta$.

    First,  we can consider the submatrix $M' = M|_{T}$, and the subvector $q' = q_T$. It is clear that if we can find a nonpositive minor of $M'$, that directly translates to a nonpositive minor of $M$, and moreover, it can easily be seen from our assumptions that $\Char(\alpha) \oplus \Char(\beta)) \cap (\out_{q'}(\alpha, M') \oplus \out_{q'}(\beta, M')) = 0^d$. So, we may as well assume that $T = [d]$ from now on.

    To reduce to $S = \alpha \cap \beta$, we will need to do a little more work.

    To do this, we define a linear map $L$ as follows. For $i \in S$, we let
    \[
        L M_{\varindex i} = e_i,
    \]
    and for $i \not \in S$,
    \[
        L e_i = e_i
    \]
    Notice that such an $L$ exists if and only if $\det(M|_{S}) \neq 0$, and that we can check for the existence of this $L$ by solving a system of linear equations. Moreover, if such an $L$ exists, then its determinant is precisely $\frac{1}{\det(M|_S)}$, which we can compute, and assume is positive for the remainder.

    If we take $S$ to be $\{1 ,\dots, k\}$, then
    \[
        LM =
        \begin{pmatrix}
            I_k & G        \\
            0   & M|_{S^c}
        \end{pmatrix},
    \]
    for some $k \times (d-k)$ matrix $G$.

    Now, we let $M' = (LM)|_{S^c} = M|_{S^c}$, and $q' = (Lq)_{S^c}$. Let $\alpha' = \alpha \setminus S$, and $\beta' = \beta \setminus S$.
    Once again, based on the above decomposition, we see that a nonpositive minor of $M'$ will directly translate to a nonpositive minor of $M$. Now, we should show that
    \[
        (\Char(\alpha') \oplus \Char(\beta')) \cap (\out_{q'}(\alpha', M') \oplus \out_{q'}(\beta', M')) = 0^d
    \]
    Hence, assume that $i \in \alpha' \setminus \beta'$.

    We will denote by $M / q,i$ the matrix obtained by replacing the $i^{th}$ column of $M$ by $q$, so that Cramer's rule states that $(M^{-1}q))_i = \frac{1}{\det(M)} \det(M/ q, i )$. We apply Cramer's rule:
    \[
        (\amat{\alpha'}{M'}^{-1} q')_i = \frac{1}{\det(\amat{\alpha'}{M'})} \det(\amat{\alpha'}{M'}/ q', i )
    \]
    We can expand this out by blocks to see that $ \det(\amat{\alpha'}{M'}/ q', i ) =  (-1)^{|S|}\det(L) \det(\amat{\alpha}{M}/ q, i )$, and analogously, we have that $ \det(\amat{\beta'}{M'}/ q', i ) =  (-1)^{|S|}\det(L) \det(\amat{\beta}{M}/ q, i )$ 

    Now, we wish to reduce to the case that $\out_q(\alpha, M) = \out_q(\beta, M) = 1^n$.

    To do this, we will once again use the notion of a signature matrix, defined in Section~\ref{sec:minor2vector}. Let $D$ be the signature matrix where $D_{ii} = -1$ if $\out_q(\alpha, M)_i = 0$ and $D_{ii} = 1$ otherwise.

    It is not hard to see that
    \[
        \amat{\alpha}{DMD} = D\amat{\alpha}{M}D,
    \]
    \[
        \amat{\beta}{DMD} = D\amat{\beta}{M}D.
    \]

    So, we may consider
    \[
        \amat{\alpha}{DMD}^{-1}Dq = D\amat{\alpha}{M}^{-1}D(Dq) = D(\amat{\alpha}{M}^{-1}q),
    \]
    \[
        \amat{\beta}{DMD}^{-1}Dq = D\amat{\beta}{M}^{-1}D(Dq) = D(\amat{\beta}{M}^{-1}q).
    \]

    We see then that if we replace $M$ with $DMD$, and $q$ by $Dq$, we can assume $\out_q(\alpha) = \out_q(\beta) = 1^n$.

    Now, we can conclude by using the next lemma.

\end{proof}
\begin{lemma}
    There is a polynomial time algorithm, which, given an input matrix $M$, a vector $q$ and sinks $\alpha$, $\beta \subseteq [n]$ with $\alpha \neq \beta$, will output a subset $\gamma$ so that the submatrix $\det(\psm{M}{\gamma}) \le 0$.
\end{lemma}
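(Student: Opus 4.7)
The plan is to reduce this lemma to the sign-reversing-vector machinery already developed in Section~\ref{sec:minor2vector}. Specifically, I will construct a sign-reversing vector (a witness of type~\ref{PV2}) out of $\alpha$ and $\beta$, then invoke Algorithm~\ref{alg:find nonpositive minor} to extract a non-positive minor. By the reductions already carried out in the preceding theorem, I may assume that $\alpha \cap \beta = \varnothing$, $\alpha \cup \beta = [n]$ (so $\beta = \bar\alpha$), and $\outmap_q(\alpha, M) = \outmap_q(\beta, M) = 1^n$. This last condition is exactly the statement that $y^{(\alpha)} \coloneqq \amat{\alpha}{M}^{-1} q$ and $y^{(\beta)} \coloneqq \amat{\beta}{M}^{-1} q$ are both strictly positive, and both can be computed in polynomial time.

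The key observation is that each $y^{(\gamma)}$ encodes an LCP solution: writing $\amat{\gamma}{M}\, y^{(\gamma)} = q$ and splitting the coordinates according to whether $i \in \gamma$ or $i \in \bar\gamma$ produces nonnegative vectors $w^{(\gamma)}, z^{(\gamma)} \in \R^n$ with $w^{(\gamma)} - M z^{(\gamma)} = q$, complementary supports, and strictly positive entries exactly on those supports. Since $\beta = \bar\alpha$, the two solutions $(w^{(\alpha)}, z^{(\alpha)})$ and $(w^{(\beta)}, z^{(\beta)})$ have \emph{opposite} support patterns. I would then set $x \coloneqq z^{(\alpha)} - z^{(\beta)}$, so that $Mx = w^{(\alpha)} - w^{(\beta)}$. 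A short case analysis on whether $i \in \alpha$ or $i \in \bar\alpha$ shows that at every coordinate exactly one of $x_i$ and $(Mx)_i$ is strictly positive and the other strictly negative, so $x_i(Mx)_i < 0$ for all $i$; since $x$ is clearly nonzero, it is a sign-reversing vector for $M$.

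Finally, I feed $x$ into the pipeline of Section~\ref{sec:minor2vector}: the corollary that reduces a general \ref{PV2} witness to a \ref{PV2*} witness via the signature-matrix trick, followed by Algorithm~\ref{alg:find nonpositive minor}, produces a subset $\gamma \subseteq [n]$ with $\det(\psm{M}{\gamma}) \le 0$ in polynomial time, which is exactly what the lemma demands. The main obstacle here is not a conceptual one but a matter of sign bookkeeping: one has to keep straight which side of the $(w,z)$ split corresponds to $\alpha$ versus $\bar\alpha$ under the definition of $\amat{\alpha}{M}$ in order to verify that the difference $z^{(\alpha)} - z^{(\beta)}$ really is sign-reversing, and to check the edge cases $\alpha = \varnothing$ and $\alpha = [n]$ separately. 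Once that bookkeeping is sorted, the whole argument is essentially the classical observation that the difference of two distinct complementary LCP solutions reverses signs coordinate-wise.
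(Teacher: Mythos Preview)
Your approach is correct and is genuinely different from the paper's. The paper never constructs a sign-reversing vector; instead it argues combinatorially. In the paper, the base case is when $\alpha$ and $\beta$ differ in a single coordinate $i$: Cramer's rule gives
\[
\frac{(\amat{\alpha}{M}^{-1}q)_i}{(\amat{\beta}{M}^{-1}q)_i}=-\frac{\det(\psm{M}{\alpha})}{\det(\psm{M}{\beta})},
\]
so one of $\det(\psm{M}{\alpha}),\det(\psm{M}{\beta})$ is nonpositive. For general $\alpha,\beta$, the paper invokes a separate Lemma~\ref{lem:newsink} (essentially the argument of Samelson--Thrall--Wesler) which, given two sinks, produces a new $q'$ and a new sink $\beta+i$ with $i\in\alpha\setminus\beta$; iterating this walks $\beta$ towards $\alpha$ one coordinate at a time until the base case is reached, stopping early if a nonpositive principal minor is ever encountered along the way.

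What you do instead is the classical ``difference of two complementary LCP solutions'' trick: from two sinks $\alpha,\beta$ you read off $(w^{(\alpha)},z^{(\alpha)})$ and $(w^{(\beta)},z^{(\beta)})$ and take $x=z^{(\alpha)}-z^{(\beta)}$, then feed the resulting sign-reversing vector back into Algorithm~\ref{alg:find nonpositive minor}. This is shorter and reuses the machinery of Section~\ref{sec:minor2vector}, at the cost of making the lemma depend on that earlier section; the paper's route is more self-contained and stays inside the USO picture. One small remark: you do not actually need the reductions to $\alpha\cap\beta=\varnothing$ and $\alpha\cup\beta=[n]$. For arbitrary sinks $\alpha\neq\beta$ the same $x$ satisfies $x_i(Mx)_i<0$ for $i\in\alpha\oplus\beta$ and $x_i(Mx)_i=0$ otherwise (since on $\alpha\cap\beta$ one has $(Mx)_i=0$ and on $\overline{\alpha\cup\beta}$ one has $x_i=0$), and $x\neq 0$ because $\alpha\oplus\beta\neq\varnothing$. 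So you can drop the appeal to the preceding theorem's reductions and prove the lemma exactly as stated.
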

\begin{proof}
    We will consider two cases: if $\alpha$ and $\beta$ differ by exactly one element, say $\alpha = \beta + i$, we have already seen that $\frac{(\amat{\alpha}{M}^{-1}q)_i}{(\amat{\beta}{M}^{-1}q)_i} = -\frac{\det(\psm{M}{\alpha})}{\det(\psm{M}{\beta})}$ by Cramer's rule.
    Therefore, $\frac{\det(\psm{M}{\alpha})}{(\det(\psm{M}{\beta})}$ is negative and one of $\psm{M}{\alpha}$ and $\psm{M}{\beta}$ has negative determinant. A simple determinant computation will give us $\gamma$ is either $\alpha$ or $\beta$.

    On the other hand, if $\alpha$ and $\beta$ have more elements in their symmetric difference, then we can find the desired $\gamma$ recursively using the following lemma:
    \begin{lemma}\label{lem:newsink}
        Suppose that $\alpha$ and $\beta$ are sinks with respect to $q$ and $\alpha \setminus \beta$ is not empty. Further suppose that $\det(\psm{M}{\alpha})> 0$, $\det(\psm{A}{\beta}) > 0$ and for all $i \in \alpha \setminus \beta$, $\det(\psm{M}{\beta + i}) > 0$.
        Then there is some $i \in \alpha \setminus \beta$ so that $\beta + i$ is a sink with respect to some vector $q'$ where $q'$ can be found using $O(1)$ linear algebra operations.
    \end{lemma}
    Using this lemma, we can repeatedly find $i \in \alpha \setminus \beta$ and let $\beta = \beta + i$ and repeat the algorithm.
    This step only fails if we find that either $\alpha$, $\beta$ or $\beta + i$ corresponds to a submatrix of $A$ with nonpositive determinant, in which case we can terminate and report the appropriate submatrix.

    We continue this process until $\alpha \subseteq \beta$, and then can replace $\alpha$ by $\beta$ and continue until we reach the case when $\alpha = \beta + i$, and then we are in our base case.

    It is clear that we require computing at most $O(n)$ basic linear algebra operations in the process of performing this algorithm, showing that the algorithm is polynomial time.
\end{proof}

The proof that follows is essentially the same as in \cite{samelson1958partition}, except using matrix notation and expanded upon slightly.
\begin{proof}{Proof of Lemma \ref{lem:newsink}}

    The idea is the following: we will find $v$ with 3 properties:
    \begin{enumerate}
        \item $\amat{\alpha}{M}^{-1} v$ is nonnegative.
        \item For each $i \in (\beta \Delta \alpha)^c$, $\amat{\beta}{M}^{-1} v > 0$, where $\alpha \Delta \beta$ is the symmetric difference of $\beta$ and $\alpha$, i.e., $\alpha \Delta \beta \coloneqq (\alpha \setminus \beta)\cup (\beta \setminus \alpha)$.
        \item There is some $i \in \alpha \setminus \beta$ so that $(\amat{\beta}{M}^{-1} v)_i < 0$.
    \end{enumerate}
    Given this, we first consider $q' = q + tv$, where $t>0$ is chosen such that $\amat{\beta}{M}^{-1}q'$ is nonnegative, but for some $i \in \alpha \setminus \beta$, $(\amat{\beta}{M}^{-1}q')_i = 0$.

    Given this fact, consider $\beta + i$, and notice that $\amat{\beta + i}{M}$ differs from $\amat{\beta}{M}$ in exactly column $i$.
    From this, it can be seen that for any vector $w$ so that $w_i = 0$, $\amat{\beta+i}{M}w = \amat{\beta}{M}$.
    In particular, $\amat{\beta+i}{M}^{-1}q' = \amat{\beta}{M}^{-1}q'$.

    Now, because of property 1 of $v$, $\amat{\alpha}{M}^{-1}q'$ is positive, so by perturbing $q'$ slightly (say adding a small multiple of the sum of the columns of $\amat{\beta+i}{M}$), we can make $\amat{\beta+i}{M}^{-1}q'$ positive, while keeping $\amat{\alpha}{M}^{-1}q'$ positive, which will imply that $\beta +i$ and $\alpha$ are both sinks for this value of $q'$.

    To find $v$, we will do the following: firstly, we consider $M_{\varindex, i}$, the $i^{th}$ column of $A$.
    It is clear that $-M_{\varindex, i}$ satisfies property 1, since $\amat{\alpha}{M}(-M_{\varindex, i}) = e_i$.

    To see that $-M_{\varindex, i}$ also satisfies property 3, we use Cramer's rule again, to see that
    \[
        (\psm{M}{\beta}^{-1}M_{\varindex, i})_i = \frac{\det(\psm{M}{\beta + i})}{\det(\psm{M}{\beta})}.
    \]
    By our assumptions, this is positive, and so property 3 holds.

    To get our desired $v$, we need to alter $M_{\varindex, i}$ to satisfy property 2. To do this, let $a = \sum_{j \in \alpha \Delta \beta} M_{\varindex, j}$.
    Notice that $\amat{\beta}{M}^{-1}a = \sum_{j \in \alpha \Delta \beta} e_j$.
    In particular, if we add a large enough multiple of $a$ to $-M_{\varindex, i}$, we will obtain a vector that satisfies all 3 properties.

    This concludes the proof.
\end{proof}

\printbibliography

\end{document}